\documentclass[onecolumn,12pt]{IEEEtran} 
\usepackage{color}
\usepackage{amsfonts,color,pslatex}
\usepackage{amssymb,amsmath,latexsym}

\usepackage[para]{threeparttable}

\newcommand{\tr}{{\rm Tr}}
\newcommand{\gf}{{\mathbb{F}}}

\newcommand{\C}{{\mathcal C}}
\newcommand{\D}{{\mathcal D}}
\newcommand{\E}{{\mathcal E}}


\newtheorem{theorem}{Theorem}[section]
\newtheorem{lemma}[theorem]{Lemma}

\newtheorem{example}[theorem]{Example}

\newtheorem{remark}[theorem]{Remark}

\begin{document}

\title{The Weight Enumerator of Three Families of Cyclic Codes\thanks{
Z. Zhou's research was supported by
the Natural Science Foundation of China, Proj. No. 61201243. C. Ding's and M. Xiong's research were supported by
The Hong Kong Research Grants Council, Proj. Nos. 600812 and 606211, respectively.
} }
\author{Zhengchun Zhou,\thanks{Z. Zhou is with the School of Mathematics, Southwest Jiaotong University,
Chengdu, 610031, China (email: zzc@home.swjtu.edu.cn).}
             Aixian Zhang,\thanks{A. Zhang is with the School of Mathematical Sciences, Capital Normal University,
Beijing 100048, P. R. China (email: zhangaixian1008@126.com)}
             Cunsheng Ding\thanks{C. Ding is with the Department of Computer Science
                                                  and Engineering, The Hong Kong University of Science and Technology,
                                                  Clear Water Bay, Kowloon, Hong Kong, China (email: cding@ust.hk).}
             and
             Maosheng Xiong\thanks{M. Xiong is with the Department of Mathematics,
The Hong Kong University of Science and Technology, Clear Water Bay, Kowloon, Hong Kong
(email: mamsxiong@ust.hk). }}

\date{\today}
\maketitle

\begin{abstract}
Cyclic codes are a subclass of linear codes and have wide applications in consumer electronics,
data storage systems, and communication systems due to their efficient encoding and decoding
algorithms. Cyclic codes with many zeros and their dual codes have been a subject of study for
many years. However, their weight distributions are known only for a very small number of cases.
In general the calculation of the weight distribution of cyclic codes is heavily based on the
evaluation of some exponential sums over finite fields.  Very recently, Li, Hu, Feng and Ge studied
a class of $p$-ary cyclic codes of length $p^{2m}-1$, where $p$ is a prime and $m$ is odd.
They determined the weight distribution of this class of  cyclic codes by establishing a connection
between the involved exponential sums with the spectrum of Hermitian forms graphs. In this paper,
this class of $p$-ary cyclic codes is generalized and the weight distribution of the generalized
cyclic codes is settled for both even $m$ and odd $m$ along with the idea of Li, Hu, Feng, and Ge.
The weight distributions of two related families of cyclic codes are also determined.
\end{abstract}

\begin{keywords}
Cyclic codes, weight distribution, quadratic form, exponential sum, Hermitian forms graphs.
\end{keywords}

\section{Introduction}

Throughout this paper, let $p$ be a prime and $q$ be a  power of $p$. An $[n,k,d]$ linear code over $\gf_q$
is a $k$-dimensional subspace of $\gf_q^n$ with minimum (Hamming) distance $d$. Let $A_i$ denote the number
of codewords with Hamming weight $i$ in a code $\mathcal{C}$ of length $n$. The weight enumerator of $\mathcal{C}$  is defined by
\begin{eqnarray*}
1+A_1x+A_2x^2+\cdots+A_nx^n.
\end{eqnarray*}
The sequence $(A_1,A_2,\cdots,A_{n})$ is called the weight distribution of the code. Clearly, the weight
distribution gives the minimum distance of the code, and thus the error correcting capability. In addition, the
weight distribution of a code allows the computation of the error probability of error detection and correction
with respect to some error detection and error correction algorithms \cite{Klov}. Thus the study of the weight
distribution of a linear code is important in both theory and applications.

An  $[n,k]$ linear code $\C$ over  $\gf_q$ is called {cyclic} if
$(c_0,c_1, \cdots, c_{n-1}) \in \C$ implies $(c_{n-1}, c_0, c_1, \cdots, c_{n-2})
\in \C$.
By identifying any vector $(c_0,c_1, \cdots, c_{n-1}) \in \gf_q^n$
with
$$
c_0+c_1x+c_2x^2+ \cdots + c_{n-1}x^{n-1} \in \gf_q[x]/(x^n-1),
$$
any code $\C$ of length $n$ over $\gf_q$ corresponds to a subset of $\gf_q[x]/(x^n-1)$.
The linear code $\C$ is cyclic if and only if the corresponding subset in $\gf_q[x]/(x^n-1)$
is an ideal.
It is well known that every ideal of $\gf_q[x]/(x^n-1)$ is principal. Let $\C=\langle g(x) \rangle$,
where $g(x)$ is monic and has the least
degree. Then $g(x)$ is called the  generator polynomial and
$h(x)=(x^n-1)/g(x)$ is referred to as the  parity-check polynomial of
$\C$.  A cyclic code is called irreducible
if its parity-check polynomial is irreducible over $\gf_q$. Otherwise, it is called reducible.

The weight distributions of both irreducible and reducible  cyclic codes have been interesting subjects
of study for many years. For information on the weight distribution of irreducible cyclic codes, the reader
is referred to the recent survey \cite{Ding12}. Information
on the weight distribution of reducible cyclic codes could be found in \cite{YCD}, \cite{Feng07}, \cite{Luo081},
\cite{Luo082}, \cite{Luo10}, \cite{Zeng10}, \cite{Ding11}, \cite{Ma11}, and \cite{Wang12}.

Very recently, Li, Hu, Feng and Ge \cite{Li-Hu-Feng-Ge} studied a class of $p$-ary cyclic codes whose duals
may have arbitrarily many zeros. They determined the weight distribution of this class of  cyclic codes by
establishing a  connection between the involved exponential sums with the spectrum of Hermitian forms graphs.
The objectives of this paper are to generalize this class of $p$-ary cyclic codes and settle the weight distribution
of the class of generalized cyclic codes for both even $m$ and odd $m$ along with the idea of Li, Hu, Feng, and Ge.
The weight distributions of two other related families of cyclic codes are also determined.

This paper is organized as follows. Section \ref{sec-introd} defines the three families of cyclic codes.
Section \ref{sec-prelim} presents results on quadratic forms over finite fields, Cayley graphs and Hermitian
forms graphs which will be needed in the sequel. Sections \ref{sec-wteven} and \ref{sec-wtodd} solve the
weight distribution problem for the three families of cyclic codes. Section \ref{sec-conclusion} summarizes
this paper.

\section{The three families of cyclic codes}\label{sec-introd}

In this section, we introduce the three families of cyclic codes to be studied in the sequel.
Before doing this, we first fix some notations which will be used in the remainder of this paper.
Let  $s=2m$ for a positive integer $m$ and $n=q^s-1$.
Let $\pi$ be a generator of the finite field $\gf_{q^s}$.
Define $t=\lfloor {m\over 2} \rfloor$ and
\begin{eqnarray*}
\Gamma=\left\{
\begin{array}{ll}
\{1,{q^m+1}\}\cup\{q^{2i-1}+1: ~1\leq i\leq t\}, &\textrm{~for~odd~}m \\
\{1\}\cup\{q^{2i-1}+1: ~1\leq i\leq t\}, &\textrm{~for~even~}m.
\end{array} \right.\ \
\end{eqnarray*}

It is easy to prove the following lemma. We omit the proof here.

\begin{lemma}\label{fact_exponents}
With the notation as above,  we have the following conclusions.
\begin{itemize}
\item For any two distinct elements $u$ and $v$ in  $\Gamma$, the two elements $\pi^{-u}$ and
         $\pi^{-v}$ are not conjugate over $\gf_q$.
\item For each $u \in \Gamma$, the smallest positive integer $\ell_u$ such that $q^{\ell_u}u\equiv u~(\bmod~n)$ is equal to
$s$ except for $u=q^m+1$ for which $\ell_{u}=m$.
\end{itemize}
\end{lemma}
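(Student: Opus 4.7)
The plan is to translate both parts of the lemma into integer congruences modulo $n=q^{2m}-1$ and reduce everything to uniqueness of the base-$q$ representation. Since $\pi$ generates $\gf_{q^s}^\times$, the powers $\pi^{-u}$ and $\pi^{-v}$ are $\gf_q$-conjugate if and only if $u q^j \equiv v \pmod{n}$ for some $j \ge 0$, and $\ell_u$ is the smallest positive integer with $u q^{\ell_u} \equiv u \pmod{n}$. Thus both assertions reduce to analyzing when $u q^j \equiv w \pmod{n}$ for $u,w \in \Gamma$.

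The key observation I would exploit is that every $u \in \Gamma \setminus \{1\}$ has the shape $u = q^a + 1$ with $a \in \{m\} \cup \{2i-1 : 1 \le i \le t\}$, so in particular $1 \le a \le m$. Because $q^c \equiv q^{c \bmod 2m} \pmod{n}$ and $a \not\equiv 0 \pmod{2m}$, one has
\begin{equation*}
u q^j \equiv q^{(j+a) \bmod 2m} + q^{j \bmod 2m} \pmod{n}
\end{equation*}
with two distinct exponents in $[0, 2m)$. The resulting integer lies in $[0, n)$ and has only $\{0,1\}$-digits in base $q$, so uniqueness of base-$q$ expansion lets us read off the unordered pair of exponents from $u q^j \bmod n$.

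For the first claim, take distinct $u, v \in \Gamma$. If one of them is $1$, then $u q^j \bmod n$ has a single nonzero base-$q$ digit while the other has two, so no congruence can hold. Otherwise, writing $u = q^a+1$ and $v = q^c+1$, the relation $u q^j \equiv v \pmod{n}$ forces $\{j \bmod 2m,\ (j+a) \bmod 2m\} = \{0, c\}$. One sub-case yields $a \equiv c \pmod{2m}$ and hence $a = c$; the other yields $a + c \equiv 0 \pmod{2m}$, which combined with $2 \le a+c \le 2m$ and $a, c \le m$ forces $a = c = m$. Either way $u = v$, contradicting $u \ne v$.

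For the second claim, the same reduction with $v = u$ gives $u q^\ell \equiv u \pmod{n}$ if and only if $\ell \equiv 0 \pmod{2m}$ or $2a \equiv 0 \pmod{2m}$. The latter holds iff $m \mid a$, which occurs only when $a = m$, i.e., $u = q^m+1$, and then the smallest such $\ell$ is $m$; for the other elements $u = q^{2i-1}+1$ one has $a = 2i-1 \le 2t-1 \le m-1$, so $m \nmid a$ and the smallest $\ell$ is $2m = s$. The case $u = 1$ is handled directly: $q^\ell \equiv 1 \pmod{n}$ iff $2m \mid \ell$, so $\ell_u = s$. The main obstacle is simply bookkeeping: one has to verify that the ranges $1 \le 2i-1 \le 2t-1$ in both parities of $m$ are genuinely disjoint from $\{m\}$ modulo $2m$, so that neither the sub-case $a+c=2m$ in Part 1 nor the sub-case $m \mid a$ in Part 2 can be triggered accidentally by the non-exceptional elements of $\Gamma$.
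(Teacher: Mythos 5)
The paper itself omits the proof of this lemma (it is stated as "easy to prove"), so there is nothing to compare against; your argument is the natural one — reduce conjugacy and the coset size $\ell_u$ to congruences $uq^j\equiv v\pmod{q^{2m}-1}$ and invoke uniqueness of base-$q$ digits for sums of at most two distinct powers of $q$ — and your case analysis ($a\equiv c$, or $a+c\equiv 0 \pmod{2m}$ forcing $a=c=m$; and $m\mid a$ iff $u=q^m+1$, giving $\ell_u=m$) is correct and complete. The only caveat is the claim that $q^{(j+a)\bmod 2m}+q^{j\bmod 2m}$ lies in $[0,n)$: this needs $q^{2m-1}+q^{2m-2}<q^{2m}-1$, i.e.\ $q^{2m-2}(q^2-q-1)>1$, which holds for all parameters except the trivial case $(q,m)=(2,1)$ (where $q^m+1=n$); that case is degenerate for the codes in question and the lemma can be checked there directly, but a one-line remark excluding it (or verifying it separately) would make your write-up airtight. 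Also, in the second part the condition should be stated as "$\ell\equiv 0\pmod{2m}$, or ($m\mid a$ and $\ell\equiv a\pmod{2m}$)" — your subsequent conclusion shows you use exactly this, but the sentence as written drops the constraint $\ell\equiv a$.
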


For any integer $u$, let $h_u(x)$ denote the minimal polynomial of $\pi^{-u}$ over $\gf_q$. Define then
\begin{eqnarray}\label{eqn_parity_check_h}
h(x)=\prod_{u\in \Gamma}h_u(x).
\end{eqnarray}
By Lemma \ref{fact_exponents}, $h_u(x)$ and $h_v(x)$ are distinct for any pair of distinct $u$ and $v$ in $\Gamma$
and $\textrm{deg}(h_u(x))=s$ for each $u \neq q^m+1$, and $\textrm{deg}(h_{q^m+1}(x))=m$.
It then follows that $h(x)$ divides $x^n-1$ and ${\textrm{deg}(h)}=m^2+2m$ whenever $m$ is even or odd.
Let $\D_{(q,m)}$ denote the cyclic code of length $n$ with parity-check polynomial $h(x)$ of (\ref{eqn_parity_check_h}).
Then the code $\D_{(q,m)}$ has dimension $m^2+2m$ and is the dual of a cyclic code with $\lfloor {m+3\over 2} \rfloor$ zeros.
Let $\E_{(q,m)}$ be the cyclic code  of length $n$ with parity-check polynomial $h(x)(x-1)$. Then $\E_{(q,m)}$
has dimension $m^2+2m+1$.
Let $h'(x)=h(x)/h_1(x)$ and  $\C_{(q,m)}$ be the cyclic code of length $n$ with parity-check polynomial $h'(x)$. Then $\C_{(q,m)}$
has dimension $m^2$.  It is clear that
\begin{align*}
\C_{(q,m)}\subset \D_{(q,m)} \subset \E_{(q,m)}.
\end{align*}

When $q=p$ and $m$ is odd, the codes $\C_{(q,m)}$ were studied and their weight distribution was determined
recently by Li, Hu, Feng, and Ge \cite{Li-Hu-Feng-Ge}. The objective of this paper is to determine the weight distribution
of aforementioned cyclic codes $\C_{(q,m)}$ $\D_{(q,m)}$ and $\E_{(q,m)}$ for any prime power $q$ and any positive integer $m$. Our work was inspired by the idea of  Li, Hu, Feng, and Ge \cite{Li-Hu-Feng-Ge}.

\section{Preliminaries}\label{sec-prelim}

In this section, we present necessary results on quadratic forms over finite fields, Cayley graphs and Hermitian
forms graphs which will be needed in the sequel.
\subsection{Quadratic forms over finite fields}

Identifying $\gf_{q^s}$ with the $s$-dimensional
$\gf_q$-vector space $\gf_q^s$, a function $Q(x)$ from $\gf_{q^s}$ to $\gf_q$ can be regarded as an $s$-variable polynomial over $\gf_q$.
The former is called a quadratic form over $\gf_q$ if the latter is a homogeneous polynomial of degree two in the form
\begin{eqnarray*}
Q(x_1,x_2,\cdots,x_s)=\sum_{1\leq i\leq j\leq s}a_{ij}x_ix_j,
\end{eqnarray*}
where $a_{ij}\in \gf_q$, and we use a basis $\{\beta_1,\beta_2,\cdots,\beta_{s}\}$ of $\gf_{q^s}$ over $\gf_q$ and identify $x=\sum_{i=1}^sx_i\beta_i$ with the vector $(x_1,x_2,\cdots,x_{s})\in \gf_q^s$.
The rank of the
quadratic form $Q(x)$ is defined as the codimension of the $\gf_q$-vector space
\begin{eqnarray*}
V=\{y\in \gf_{q^s}: ~Q(x+y)-Q(x)-Q(y)=0 \textrm{~for~all~}x\in \gf_{q^s}\}.
\end{eqnarray*}
That is $|V|=q^{s-r}$ where $r$ is the rank of $Q(x)$.

Quadratic forms  have been well studied (see \cite{Niddle}, \cite{Klappereven}, \cite{Klapperodd}, for example). Here we follow the treatment in \cite{Klappereven} and \cite{Klapperodd}.
It should be noted that the rank of a quadratic form over $\gf_q$ is the smallest number of variables required to represent the quadratic form, up
to  nonsingular coordinate transformations.
Mathematically, any quadratic
form of rank $r$ can be transferred to  three canonical forms as follows. Throughout this section, let $B_{2j}(x)=x_1x_2+x_3x_4+\cdots+x_{2j-1}x_{2j}$ where $j\geq 0$ is an integer (we assume that $B_0=0$ when $j=0$).
Let $\nu(x)$ be a function over $\gf_q$ defined by $\nu(0)=q-1$  and $\nu(\zeta)=-1$ for any
$\zeta\in \gf_q^*$.

\begin{lemma}\label{Lemma_Klapper1}(\cite{Klappereven})
Let $q$ be even. Then every quadratic form $Q(x)$ of rank $r$ in $s$ variables over $\gf_q$ is equivalent to
one of the following three standard types:

\vspace{2mm}

~~~~Type I: ~~~~~$B_r(x)$,~~~$r$~~even;

\vspace{2mm}

~~~~Type II: ~~~ $B_{r-1}(x)+x_m^2$,~~~$r$~~odd;
\vspace{2mm}

~~~~Type III: ~~ $B_{r-2}(x)+\theta x_{r-1}^2+x_{r-1}x_r+\theta x_m^2$, ~~~$r$~~even;\\
where $\theta$ is a fixed element in $\gf_{q}$ satisfying $\tr_{q/2}(\theta)=1$.
Furthermore, for any $\zeta\in \gf_q$, the number of solutions $x\in \gf_{q^s}$  to the equation $Q(x)=\zeta$ is:

~~~~Type I: ~~~~ $q^{s-1}+\nu(\zeta)q^{s-r/2-1}$;
\vspace{2mm}

~~~~Type II: ~~~ $q^{s-1}$;

\vspace{2mm}
~~~~Type III: ~~ $q^{s-1}-\nu(\zeta)q^{s-r/2-1}$.\\

\end{lemma}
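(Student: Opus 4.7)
The plan is in two parts: first establish the three canonical forms by a Witt-type reduction with respect to the associated bilinear form, then count solutions by slicing against these canonical forms.

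For the canonical forms, since $q$ is even the associated bilinear form $B(x,y) := Q(x+y) - Q(x) - Q(y)$ is symmetric and alternating, because $B(x,x) = Q(2x) - 2Q(x) = 0$. The defining condition $|V| = q^{s-r}$ for the radical is exactly that $B$ has rank $r$ after modding out by $V$ (rank $r-1$ in the Type II case, where a single square term is invisible to $B$). I would pick a complement $W$ of $V$ in $\gf_{q^s}$ and carry out iterative symplectic reduction on $W$: try to find $u, v \in W$ with $B(u,v) = 1$ and $Q(u) = Q(v) = 0$, extract a hyperbolic summand $x_{2i-1}x_{2i}$, restrict to its $B$-orthogonal complement and repeat. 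Three terminal states can occur. If $r$ is even and the stripping finishes, the form is $B_r(x)$ (Type I). If $r$ is odd, after $(r-1)/2$ steps a one-dimensional residue $\langle w \rangle$ remains with $B(w,w) = 0$ but necessarily $Q(w) \neq 0$ (else $w \in V$); after scaling $Q(w) = 1$, yielding Type II. If $r$ is even but at some stage the remaining two-dimensional subspace contains no nonzero vector with $Q = 0$, then the residual anisotropic binary form is, after a change of basis, equivalent to $\theta x^2 + xy + \theta y^2$ with $\tr_{q/2}(\theta) = 1$, yielding Type III. The $s-r$ coordinates coming from the radical $V$ can be arranged so that $Q$ vanishes on a chosen basis of $V$ and they contribute nothing to the canonical form.

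For the solution count the $s-r$ radical variables are free and yield a factor $q^{s-r}$, so everything reduces to counting in the $r$ essential variables. For Type I I would induct on $r/2$: writing $B_r = x_1 x_2 + B_{r-2}$ and conditioning on whether $x_1 = 0$, a short recursion yields the count $q^{r-1} + \nu(\zeta) q^{r/2-1}$, and multiplying by $q^{s-r}$ gives the claimed formula. For Type II, fixing the other $s-1$ variables determines $x_r^2$ uniquely, and since $t \mapsto t^2$ is a bijection on $\gf_q$ (Frobenius is injective on a finite field, hence bijective) there is a unique $x_r$, giving exactly $q^{s-1}$ solutions. For Type III I would first use Artin--Schreier theory to count $(x_{r-1}, x_r) \in \gf_q^2$ with $\theta x_{r-1}^2 + x_{r-1} x_r + \theta x_r^2 = \eta$: the normalized equation $t^2 + t = \alpha$ is solvable in $\gf_q$ iff $\tr_{q/2}(\alpha) = 0$, which together with the homogeneity $f(\lambda x, \lambda y) = \lambda^2 f(x,y)$ produces $1$ solution for $\eta = 0$ and $q + 1$ solutions for each $\eta \neq 0$. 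Convolving this count with the Type I count for $B_{r-2}$ and simplifying algebraically produces $q^{r-1} - \nu(\zeta)q^{r/2-1}$, hence the claimed $q^{s-1} - \nu(\zeta)q^{s-r/2-1}$.

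The main obstacle is the Type III case, specifically showing that every anisotropic binary quadratic form over $\gf_q$ (with $q$ even) is equivalent to $\theta x^2 + xy + \theta y^2$ for some $\theta$ with $\tr_{q/2}(\theta) = 1$, and that the correct sign $-\nu(\zeta)$ emerges in the counting. Both facts rest on Artin--Schreier theory in characteristic $2$, which classifies quadratic extensions of $\gf_q$ via the image and kernel of $\tr_{q/2}: \gf_q \to \gf_2$. Everything else is a standard symplectic reduction argument paired with a recursive point count on a diagonalized form.
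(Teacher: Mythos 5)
The paper never proves this lemma: it is quoted verbatim from Klapper \cite{Klappereven}, so there is no internal proof to compare you against. Your outline follows what is surely the standard route to that result (Witt-type symplectic reduction of the polar form, Artin--Schreier classification of the anisotropic binary form, then an elementary slice-and-count), and the counting half is correct as sketched: the recursion for $B_r$, the bijectivity of squaring for Type II, and the $1$ versus $q+1$ fiber sizes for the anisotropic plane all check out and give the stated formulas after multiplying by the free factor $q^{s-r}$.

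Two points in the classification half need repair. First, the Type II mechanism as you describe it cannot occur: in characteristic $2$ the polar form $B$ is alternating, so on any complement $W$ of its radical the restriction $B|_W$ is nondegenerate and $W$ is even-dimensional, and the hyperbolic stripping of $W$ can never leave a one-dimensional residue. The odd-rank case arises instead because $Q$ need not vanish on $\mathrm{rad}(B)$: there $Q$ is additive and Frobenius-semilinear, so in general it vanishes only on a hyperplane of $\mathrm{rad}(B)$, and the extra square $x_r^2$ is carried by a one-dimensional complement of that hyperplane inside $\mathrm{rad}(B)$. (Equivalently, with the paper's literal definition $V=\mathrm{rad}(B)$ the codimension is always even in characteristic $2$, so the statement tacitly uses the quadratic radical $\{y\in\mathrm{rad}(B):Q(y)=0\}$; your claim that $Q$ can be arranged to vanish on a whole basis of $V$ is exactly what fails in the Type II case.) Second, you implicitly assume the stripping can only stall on a two-dimensional piece; this needs the standard fact (Chevalley--Warning, or a direct count) that a quadratic form in at least three variables over $\gf_q$ is isotropic, and in the odd-rank case one should also note that an anisotropic plane orthogonally added to a nonzero square term is isotropic, so that ``anisotropic plane plus square'' collapses back to Type II rather than yielding a fourth type. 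Both repairs are routine, and with them your argument is a complete proof of the cited lemma.
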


\begin{lemma}\label{Lemma_Klapper2}(\cite{Klapperodd})
Let $q$ be odd. Then every quadratic form $Q(x)$ of rank $r$ in $s$ variables over $\gf_q$ is equivalent to
one of the following three standard types:

\vspace{2mm}
~~~~Type I: ~~~~ $B_r(x)$,~~~$r$~~even;

\vspace{2mm}

~~~~Type II: ~~~ $B_{r-1}(x)+\mu x_m^2$,~~~$r$~~odd;
\vspace{2mm}

~~~~Type III: ~~ $B_{r-2}(x)+x^2_{r-1}-\varsigma x_r^2$,~~~$r$~~even;\\
where $\mu\in \{1, \varsigma\}$ and $\varsigma$ is a fixed nonsquare in $\gf_{q}$.
Furthermore, for any $\zeta\in \gf_q$, the number of solutions $x\in \gf_{q^s}$ to the equation $Q(x)=\zeta$ is:

~~~~Type I: ~~~~ $q^{s-1}+\nu(\zeta)q^{s-r/2-1}$;
\vspace{2mm}

~~~~Type II: ~~~ $q^{s-1}+\eta(\mu \zeta)q^{s-(r+1)/2}$;

\vspace{2mm}
~~~~Type III: ~~ $q^{s-1}-\nu(\zeta)q^{s-r/2-1}$;\\
where $\eta$ is the quadratic (multiplicative) character of $\gf_q$ and $\eta(0)$ is assumed to be 0.
\end{lemma}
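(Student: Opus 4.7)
The plan is to establish Lemma \ref{Lemma_Klapper2} in two independent steps: first, classify the quadratic form up to $\gf_q$-linear equivalence into one of the three canonical types; second, count the solutions to $Q(x)=\zeta$ by reducing to one-variable quadratic Gauss sums. Since $q$ is odd, $2$ is invertible in $\gf_q$, so a standard completing-the-square argument shows $Q$ is equivalent to a diagonal form $\sum_{i=1}^r c_i x_i^2$ with $c_i \in \gf_q^*$, where $r$ matches the defined rank because the radical $V$ corresponds exactly to the variables $x_{r+1},\ldots,x_s$ which do not appear. Scaling each $x_i$ by an element of $\gf_q^*$ replaces $c_i$ by $c_i$ times a square, so I can normalize $c_i \in \{1,\varsigma\}$.

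Next I would show that the square-class of the discriminant $\prod_i c_i$ together with the parity of $r$ form a complete set of invariants, producing exactly the three listed types. The hyperbolic pair $x_1x_2 = \tfrac14\bigl((x_1+x_2)^2-(x_1-x_2)^2\bigr)$ has square discriminant, so iterating this identity converts any even-rank diagonal form with square discriminant into $B_r$ (Type I); an odd rank leaves a single coefficient $\mu\in\{1,\varsigma\}$ after pairing off hyperbolic blocks (Type II); and an even rank with non-square discriminant leaves the distinguished pair $x_{r-1}^2-\varsigma x_r^2$ (Type III). The key algebraic fact needed to pair two coefficients equal to $\varsigma$ is that $\varsigma y^2+\varsigma z^2 = \varsigma(y^2+z^2)$ represents $1$, which follows from every element of $\gf_q$ being a sum of two squares.

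For the solution count, I would use the additive-character identity
\begin{eqnarray*}
N(\zeta) \;=\; \frac{1}{q}\sum_{t\in \gf_q}\psi(-t\zeta)\sum_{x\in \gf_q^s}\psi(tQ(x)),
\end{eqnarray*}
where $\psi$ is a nontrivial additive character of $\gf_q$; the $t=0$ term contributes $q^{s-1}$. For $t\ne 0$, the canonical form reduces the inner sum to a product of $r$ one-variable sums $\sum_{u\in \gf_q}\psi(tcu^2)=\eta(tc)G$, multiplied by $q^{s-r}$ from the radical variables, where $G$ is the quadratic Gauss sum with $G^2=\eta(-1)q$. Substituting each of the three canonical types, collecting powers of $\eta$, and summing over $t\ne 0$ using standard orthogonality and the evaluation $\sum_{t\ne 0}\eta(t)\psi(-t\zeta)=\eta(-\zeta)G$ for $\zeta\ne 0$ yields the stated formulas involving $\nu(\zeta)$ in the even-rank cases and $\eta(\mu\zeta)$ in the odd-rank case.

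The main technical obstacle is the Type II case: unlike the even-rank types, the count retains a true quadratic character dependence $\eta(\mu\zeta)$ rather than collapsing to $\nu(\zeta)$, because an odd number of Gauss-sum factors leaves one uncancelled $G$, whose sign depends jointly on $\mu$ and on $\eta(-1)$. Tracking this sign correctly through the normalization $c_i\mapsto\{1,\varsigma\}$, and verifying that $\prod_i\eta(c_i)=\eta(\mu)$ is precisely the square-class of the discriminant, so that the final answer depends only on $\mu\zeta$, is the most delicate bookkeeping in the argument and is where the identity $G^2=\eta(-1)q$ must be used carefully.
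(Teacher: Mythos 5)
The paper offers no proof of this lemma at all: it is quoted verbatim from Klapper \cite{Klapperodd}, so your argument has to stand on its own. Your overall route is the standard one and is sound in outline: diagonalize (legitimate since $q$ is odd, and the rank as defined via the radical $V$ does equal the number of surviving variables), normalize coefficients into the square classes $\{1,\varsigma\}$, classify by rank parity and discriminant class, and count solutions by additive characters, reducing to one-variable sums $\sum_u\psi(tcu^2)=\eta(tc)G$ with $G^2=\eta(-1)q$. The counting computation, done this way, does yield $q^{s-1}+\nu(\zeta)q^{s-r/2-1}$ with sign $\pm1$ in the even-rank cases and $q^{s-1}+\eta(\mu\zeta)q^{s-(r+1)/2}$ in the odd-rank case, exactly as you indicate.

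There is, however, a concrete error in the classification step as you state it. The hyperbolic plane $x_1x_2\cong\mathrm{diag}(1,-1)$ (via your identity $x_1x_2=\tfrac14((x_1+x_2)^2-(x_1-x_2)^2)$) has discriminant $-1$, which is a square only when $q\equiv 1\ (\bmod\ 4)$; so ``even rank with square discriminant $\Rightarrow$ Type I'' is false in general. For instance, with $q\equiv 3\ (\bmod\ 4)$ the binary form $x_1^2+x_2^2$ has square discriminant but is anisotropic, hence equivalent to the Type III plane $x_1^2-\varsigma x_2^2$, while $B_2=x_1x_2$ has nonsquare discriminant. The invariant that actually separates Type I from Type III (for fixed even rank $r$) is the square class of $(-1)^{r/2}d$, where $d$ is the discriminant: it is trivial for $B_r$ and equals the class of $\varsigma$ for Type III. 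The same care is needed in your final paragraph: for Type II the product of the diagonal square classes is $\eta\bigl((-1)^{(r-1)/2}\mu\bigr)$, not $\eta(\mu)$; the extra factor $\eta(-1)^{(r-1)/2}$ is precisely what cancels against $G^{r-1}=(\eta(-1)q)^{(r-1)/2}$, and together with $\sum_{t\neq0}\eta(t)\psi(-t\zeta)=\eta(-\zeta)G$ and $G^2=\eta(-1)q$ this is what makes the answer depend on $\mu\zeta$ alone. Both slips are repairable bookkeeping (the Gauss-sum computation itself produces the correct signed-discriminant invariant), but as written your criterion for distinguishing Types I and III is wrong for all $q\equiv 3\ (\bmod\ 4)$ and the identity you propose to ``verify'' in the Type II case is not the correct one.
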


It is easy to see from the above classification that a quadratic
form  is equivalent to Type I or Type III if it has even rank and otherwise
is equivalent to Type II. The following result follows directly
from Lemmas \ref{Lemma_Klapper1} and \ref{Lemma_Klapper2}.

\begin{lemma}(\cite{Klappereven}, \cite{Klapperodd})\label{Lemma_Quadratic_modue}
Let $Q(x)$ be a quadratic form from $\gf_{q^s}$ to $\gf_q$ with rank $r$. Define
\begin{eqnarray}\label{Def_T_Q}
T_Q=\sum_{x\in \gf_{q^s}}\omega_p^{\tr_{q/p}(Q(x))},
\end{eqnarray}
where $\omega_p$ is a  primitive $p$-th  root of unity.
Then $|T_Q|=q^{s-r/2}~{\textrm{or}}~0$.
Moreover, if $r$ is even and $T_Q\neq 0$,
then
\begin{eqnarray*}
\sum_{x\in \gf_{q^s}}\omega_p^{\tr_{q/p}(aQ(x))}=\epsilon q^{s-r/2}~\textrm{for~any~}a\in \gf^*_q,
\end{eqnarray*}
here and hereinafter $\epsilon=1$ if $Q(x)$ is equivalent to Type I and $\epsilon=-1$ if $Q(x)$ is equivalent to Type III.
\end{lemma}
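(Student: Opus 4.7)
The plan is to convert $T_Q$ to a sum over the level sets of $Q$ and then feed in the solution counts from Lemmas~\ref{Lemma_Klapper1} and \ref{Lemma_Klapper2}. Setting $N_Q(\zeta)=\#\{x\in\gf_{q^s}:Q(x)=\zeta\}$, we have
\begin{eqnarray*}
T_Q \;=\; \sum_{\zeta \in \gf_q} N_Q(\zeta)\,\omega_p^{\tr_{q/p}(\zeta)}.
\end{eqnarray*}
Because any invertible $\gf_q$-linear change of variables on $\gf_{q^s}$ is a bijection, $T_Q$ coincides with the analogous sum for any equivalent quadratic form, so I may assume $Q$ is in one of the three canonical types and handle them in turn.

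For Types~I and III (even rank $r$), both lemmas give $N_Q(\zeta) = q^{s-1} \pm \nu(\zeta)\,q^{s-r/2-1}$ with sign $+$ for Type~I and $-$ for Type~III. The $q^{s-1}$ contribution vanishes because $\sum_{\zeta\in\gf_q}\omega_p^{\tr_{q/p}(\zeta)}=0$, and the remainder is handled by
\begin{eqnarray*}
\sum_{\zeta\in\gf_q}\nu(\zeta)\,\omega_p^{\tr_{q/p}(\zeta)} \;=\; (q-1) - \sum_{\zeta\in\gf_q^*}\omega_p^{\tr_{q/p}(\zeta)} \;=\; (q-1)-(-1) \;=\; q,
\end{eqnarray*}
so that $T_Q = \pm q^{s-r/2}$.

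For Type~II (odd rank $r$), if $q$ is even then Lemma~\ref{Lemma_Klapper1} gives $N_Q(\zeta)=q^{s-1}$ identically in $\zeta$, hence $T_Q=0$. If $q$ is odd, Lemma~\ref{Lemma_Klapper2} yields $N_Q(\zeta) = q^{s-1} + \eta(\mu\zeta)\,q^{s-(r+1)/2}$, reducing $T_Q$ to $\eta(\mu)\,q^{s-(r+1)/2}$ times the standard quadratic Gauss sum $\sum_{\zeta\in\gf_q^*}\eta(\zeta)\omega_p^{\tr_{q/p}(\zeta)}$, whose modulus is $\sqrt{q}$. Hence $|T_Q| = q^{s-r/2}$, and overall $|T_Q|\in\{q^{s-r/2},0\}$ as claimed.

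For the moreover part, I rerun the Type~I/III calculation with $\omega_p^{\tr_{q/p}(a\zeta)}$ in place of $\omega_p^{\tr_{q/p}(\zeta)}$. Because $\zeta\mapsto a\zeta$ is a bijection of $\gf_q$ fixing $0$ and preserving $\nu$, the identities $\sum_\zeta\omega_p^{\tr_{q/p}(a\zeta)}=0$ and $\sum_\zeta\nu(\zeta)\omega_p^{\tr_{q/p}(a\zeta)}=q$ continue to hold for every $a\in\gf_q^*$. The Type~I/III computation then outputs $\epsilon q^{s-r/2}$ with exactly the same $\epsilon$ for all $a$. No real obstacle should arise; the subtlest point is verifying that rescaling does not flip the sign of $\epsilon$, which is immediate from the relation $N_{aQ}(\zeta)=N_Q(\zeta/a)$ together with $\nu(\zeta/a)=\nu(\zeta)$.
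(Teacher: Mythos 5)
Your proof is correct, and it follows the same route the paper intends: the paper states this lemma as a direct consequence of Lemmas~\ref{Lemma_Klapper1} and \ref{Lemma_Klapper2} (citing \cite{Klappereven}, \cite{Klapperodd}) without further detail, and your argument simply carries out that deduction by summing the solution counts $N_Q(\zeta)$ against the additive character, including the Gauss-sum estimate for the odd-rank case and the rescaling check for $a\in\gf_q^*$. Nothing is missing.
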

\vspace{2mm}

Lemma \ref{Lemma_Quadratic_modue} will be used to
prove that the quadratic forms involved in the next sections must have even rank.
Thus we focus on  quadratic forms with even rank in the sequel.
\vspace{2mm}

\begin{lemma}(\cite{Klappereven},\cite{Klapperodd})\label{Lemma_number_solution}
Let $Q(x)$ be a quadratic form from $\gf_{q^s}$ to $\gf_q$ with even rank $r$.
For each $\beta\in \gf_{q^s}$ and each $\zeta\in \gf_q$, let
$N_{Q,\beta}(\zeta)$ denote the number of solutions $x\in \gf_{q^s}$ to the equation
\begin{eqnarray}\label{eqn_Q_B_zeta}
Q(x)+\tr_{q^s/q}(\beta x)=\zeta.
\end{eqnarray}
Then for each $\zeta\in \gf_q$, there are $q^{s}-q^{r}$ many $\beta$'s such that
$$N_{Q,\beta}(\zeta)=q^{s-1},$$
and there are $q^{r-1}+\epsilon \nu(c) q^{r/2-1}$ many $\beta$'s such that
$$N_{Q,\beta}(\zeta)=q^{s-1}+\epsilon \nu(\zeta+c)q^{s-r/2-1},$$
where $c$ runs through $\gf_{q}$.
\end{lemma}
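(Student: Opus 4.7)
The plan is to diagonalize $Q$ via Lemma \ref{Lemma_Klapper1} or \ref{Lemma_Klapper2}, eliminate the linear term $\tr_{q^s/q}(\beta x)$ by completing the square, and then apply the same lemma a second time to count the $\beta$'s realizing each value of the left-hand side. Since $r$ is even, $Q$ is equivalent to either Type I or Type III. First I would fix an $\gf_q$-basis of $\gf_{q^s}$ in which $Q(x_1,\ldots,x_s)$ is in standard form and involves only $x_1,\ldots,x_r$, and write $\tr_{q^s/q}(\beta x)=\sum_{i=1}^s b_i x_i$; the map $\beta\mapsto(b_1,\ldots,b_s)$ is a $\gf_q$-linear isomorphism, so counting $\beta$'s reduces to counting tuples $(b_1,\ldots,b_s)\in\gf_q^s$.

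I would then split on whether $(b_{r+1},\ldots,b_s)$ vanishes. If some $b_i$ with $i>r$ is nonzero, then $x_i$ occurs only linearly with nonzero coefficient in $Q(x)+\tr_{q^s/q}(\beta x)-\zeta$, so for any assignment of the other $s-1$ variables there is a unique value of $x_i$; hence $N_{Q,\beta}(\zeta)=q^{s-1}$, accounting for $q^s-q^r$ values of $\beta$. Otherwise, completing the square pair-by-pair in the standard form produces a substitution $x_i=y_i+u_i(b)$ under which
$$
Q(x)+\tr_{q^s/q}(\beta x)=Q_0(y_1,\ldots,y_r)-c,
$$
where $Q_0$ is the rank-$r$ form of the same standard type as $Q$ and $c=c(b_1,\ldots,b_r)$ is the completed-square constant. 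Since $y_{r+1},\ldots,y_s$ are free, Lemma \ref{Lemma_Klapper1} or \ref{Lemma_Klapper2} applied to $Q_0$ in $r$ variables gives
$$
N_{Q,\beta}(\zeta)=q^{s-r}\bigl(q^{r-1}+\epsilon\nu(\zeta+c)q^{r/2-1}\bigr)=q^{s-1}+\epsilon\nu(\zeta+c)q^{s-r/2-1}.
$$

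The remaining step is to count, for each $c_0\in\gf_q$, how many $\beta$ in this second case satisfy $c(b_1,\ldots,b_r)=c_0$. A direct calculation shows that $c$ is itself a rank-$r$ quadratic form on $\gf_q^r$ of the same standard type as $Q_0$: in Type I this is immediate because $c=b_1b_2+\cdots+b_{r-1}b_r$; in even-characteristic Type III the two-variable block produced by the completion is again $\theta b_{r-1}^2+b_{r-1}b_r+\theta b_r^2$; in odd-characteristic Type III the block $\tfrac14 b_{r-1}^2-\tfrac{1}{4\varsigma}b_r^2$ becomes the standard $b_{r-1}^2-\varsigma b_r^2$ after a nonsingular diagonal rescaling, using $-\tfrac{1}{4\varsigma}=-\varsigma\cdot\bigl(\tfrac{1}{2\varsigma}\bigr)^2$. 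A final application of Lemma \ref{Lemma_Klapper1} or \ref{Lemma_Klapper2} to the form $c$ then yields $q^{r-1}+\epsilon\nu(c_0)q^{r/2-1}$ values of $\beta$ with $c=c_0$, giving the stated enumeration.

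The main obstacle is precisely this uniformity claim: one has to verify in each of the three regimes (Type I, even-characteristic Type III, odd-characteristic Type III) that the ``dual'' form $c(b_1,\ldots,b_r)$ produced by completing the square has the same standard type as $Q_0$, so that the Lemma can be invoked a second time with the same sign $\epsilon$. Everything else is bookkeeping; as a consistency check, $q^s-q^r+\sum_{c_0\in\gf_q}\bigl(q^{r-1}+\epsilon\nu(c_0)q^{r/2-1}\bigr)=q^s$ and the double sum $\sum_\beta N_{Q,\beta}(\zeta)$ matches the independent count $(q^s-1)q^{s-1}+q^s\mathbf{1}_{\zeta=0}$.
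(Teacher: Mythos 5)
Your proof is correct, but there is nothing in the paper to compare it with: the paper states Lemma \ref{Lemma_number_solution} without proof, quoting it from Klapper's papers \cite{Klappereven}, \cite{Klapperodd}. Your self-contained argument is the natural one and it goes through: pass to a basis in which $Q$ is in standard form in $x_1,\dots,x_r$ only (reading the paper's ``$x_m$'' in the standard types as $x_r$), use nondegeneracy of the trace pairing to identify $\beta$ with $(b_1,\dots,b_s)$, handle the $q^s-q^r$ choices with some $b_i\neq 0$, $i>r$, by solving for that variable, and otherwise absorb the linear part by a translation, which produces the shift constant $c(b_1,\dots,b_r)$ and the count $q^{s-1}+\epsilon\nu(\zeta+c)q^{s-r/2-1}$. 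You correctly isolated the only nontrivial point, namely that $c$ is again a nondegenerate rank-$r$ form of the \emph{same} type as $Q$, so the counting lemma can be applied a second time with the same $\epsilon$; your verifications in the three regimes are right: $c=b_1b_2+\cdots+b_{r-1}b_r$ for Type I, the block $\theta b_{r-1}^2+b_{r-1}b_r+\theta b_r^2$ for even-characteristic Type III (the translation works there precisely because the cross term $x_{r-1}x_r$ supplies both linear terms, which is why evenness of $r$ matters), and $\tfrac14 b_{r-1}^2-\tfrac1{4\varsigma}b_r^2$ equivalent to $b_{r-1}^2-\varsigma b_r^2$ in odd characteristic. Since $\nu$ never vanishes, the two families of $\beta$'s are indeed disjoint, so the enumeration is exactly as stated; the closing consistency checks are a nice touch but not needed.
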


The following two lemmas will be used to prove the  main results of this paper.
\vspace{2mm}

\begin{lemma}\label{Lemma_distr_main1}
Let $Q(x)$ be a quadratic form from $\gf_{q^s}$ to $\gf_q$ with even rank $r$.
Define
\begin{eqnarray}\label{def_S_Q}
S_{Q}(\beta)=\sum_{a\in \gf^*_q}\sum_{x\in \gf_{q^s}}\omega_p^{\tr_{q/p}(a(Q(x)+\tr_{q^s/q}(\beta x)))}.
\end{eqnarray}
Then,  as $\beta$ runs through $\gf_{q^s}$, the values of  $S_{Q}(\beta)$ have the following distribution:
\begin{eqnarray*}
S_{Q}(\beta)=\left\{
\begin{array}{ll}
0, &q^s-q^{r} \textrm{~times}\\
\epsilon (q-1) q^{s-r/2}, &q^{r-1}+\epsilon (q-1)  q^{r/2-1}\textrm{~times} \\
-\epsilon q^{s-r/2}, &(q^{r-1}-\epsilon q^{r/2-1})(q-1)\textrm{~times} .
\end{array} \right.\ \
\end{eqnarray*}
\end{lemma}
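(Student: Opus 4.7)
The plan is to reduce $S_Q(\beta)$ to a count of solutions of a single equation and then invoke Lemma~\ref{Lemma_number_solution}. First I would swap the order of summation in (\ref{def_S_Q}) and group the terms according to the value $\zeta = Q(x) + \tr_{q^s/q}(\beta x)$. This gives
\begin{equation*}
S_Q(\beta) = \sum_{\zeta \in \gf_q} N_{Q,\beta}(\zeta) \sum_{a\in \gf_q^*} \omega_p^{\tr_{q/p}(a\zeta)}.
\end{equation*}
Using the standard orthogonality identity $\sum_{a \in \gf_q^*}\omega_p^{\tr_{q/p}(a\zeta)} = q\cdot \mathbf{1}[\zeta=0] - 1$ together with $\sum_{\zeta\in \gf_q} N_{Q,\beta}(\zeta) = q^s$, this collapses to the clean formula
\begin{equation*}
S_Q(\beta) = q\, N_{Q,\beta}(0) - q^s.
\end{equation*}

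Next I would read off the distribution of $S_Q(\beta)$ directly from the distribution of $N_{Q,\beta}(0)$ furnished by Lemma~\ref{Lemma_number_solution} with $\zeta = 0$. The first bucket, consisting of $q^s - q^r$ values of $\beta$ with $N_{Q,\beta}(0) = q^{s-1}$, contributes $S_Q(\beta) = 0$. For the remaining $\beta$'s the parameter $c$ runs over $\gf_q$, and I would split into two subcases according to whether $c=0$ or $c \in \gf_q^*$, using that $\nu(0) = q-1$ and $\nu(c) = -1$ for $c \neq 0$. The case $c = 0$ gives $q^{r-1} + \epsilon(q-1)q^{r/2-1}$ many $\beta$'s with $N_{Q,\beta}(0) = q^{s-1} + \epsilon(q-1)q^{s-r/2-1}$, so $S_Q(\beta) = \epsilon(q-1)q^{s-r/2}$. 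For each of the $q-1$ choices of $c \in \gf_q^*$ one obtains $q^{r-1} - \epsilon q^{r/2-1}$ many $\beta$'s with $N_{Q,\beta}(0) = q^{s-1} - \epsilon q^{s-r/2-1}$, yielding $S_Q(\beta) = -\epsilon q^{s-r/2}$; summing over the $q-1$ values of $c$ gives the multiplicity $(q-1)(q^{r-1} - \epsilon q^{r/2-1})$ claimed.

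There is no real obstacle here: the only thing to check carefully is the bookkeeping on the parameter $c$ in Lemma~\ref{Lemma_number_solution}, since that lemma describes the distribution of $N_{Q,\beta}(\zeta)$ via an auxiliary variable $c$ running over $\gf_q$ whose role has to be properly tracked. A minor sanity check worth including is that the three multiplicities
\begin{equation*}
(q^s - q^r) + \bigl(q^{r-1} + \epsilon(q-1)q^{r/2-1}\bigr) + (q-1)\bigl(q^{r-1} - \epsilon q^{r/2-1}\bigr) = q^s
\end{equation*}
add up to $q^s$, confirming consistency with the fact that $\beta$ ranges over $\gf_{q^s}$.
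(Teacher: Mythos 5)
Your proposal is correct and follows essentially the same route as the paper: both reduce to the identity $S_Q(\beta)=qN_{Q,\beta}(0)-q^s$ (the paper by completing the sum over $a\in\gf_q$, you by orthogonality over $a\in\gf_q^*$, which is the same computation) and then read off the distribution of $N_{Q,\beta}(0)$ from Lemma~\ref{Lemma_number_solution} with $\zeta=0$, splitting the auxiliary parameter $c$ into $c=0$ and $c\in\gf_q^*$ exactly as the paper does implicitly. Your explicit bookkeeping of $c$ and the multiplicity sanity check are fine additions but not a different argument.
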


\begin{proof}
According to the definition of $S_{Q}(\beta)$, we have
\begin{eqnarray}\label{eqn_S_N_QB1}
S_{Q}(\beta)&&=\sum_{a\in \gf_q}\sum_{x\in \gf_{q^s}}\omega_p^{\tr_{q/p}(a(Q(x)+\tr_{q^s/q}(\beta x)))}-q^s\nonumber\\
&&=qN_{Q,\beta}(0)-q^s,
\end{eqnarray}
where $N_{Q,\beta}(0)$ is the number of solutions  to Equation (\ref{eqn_Q_B_zeta})
for $\zeta=0$. By Lemma  \ref{Lemma_number_solution} and the definition of $\nu(x)$,  the values of $N_{Q,\beta}(0)$, as $\beta$
runs over $\gf_{q^s}$, have the following distribution
\begin{eqnarray*}
N_{Q,\beta}(0)=\left\{
\begin{array}{ll}
q^{s-1}, &q^s-q^{r} \textrm{~times}\\
q^{s-1}+\epsilon (q-1) q^{s-r/2-1}, &q^{r-1}+\epsilon (q-1)  q^{r/2-1}\textrm{~times} \\
q^{s-1}-\epsilon q^{s-r/2-1}, &(q^{r-1}-\epsilon q^{r/2-1})(q-1)\textrm{~times} .
\end{array} \right.\ \
\end{eqnarray*}
The value distribution of $S_{Q}(\beta)$ then follows from Equation (\ref{eqn_S_N_QB1})
and the value distribution of $N_{Q,\beta}(0)$.
\end{proof}

\vspace{2mm}

\begin{lemma}\label{Lemma_distr_main2}
Let $Q(x)$ be a quadratic form from $\gf_{q^s}$ to $\gf_q$ with even rank $r$. For each $b \in \gf^*_{q}$, define
\begin{eqnarray*}
R_{Q,b}(\beta)=\sum_{a\in \gf^*_q}\sum_{x\in \gf_{q^s}}\omega_p^{\tr_{q/p}(a(Q(x)+\tr_{q^s/q}(\beta x)+b))}.
\end{eqnarray*}
Then, as $\beta$ runs through $\gf_{q^s}$, the values of $R_{Q,b}(\beta)$ for any given $b \in \gf^*_{q}$
have the following distribution
\begin{eqnarray*}
R_{Q,b}(\beta)=\left\{
\begin{array}{ll}
0, &q^s-q^{r} \textrm{~times}\\
\epsilon(q-1) q^{s-r/2}, &q^{r-1}-\epsilon q^{r/2-1}\textrm{~times}\\
-\epsilon q^{s-r/2}, &q^{r}-q^{r-1}+\epsilon q^{r/2-1}\textrm{~times} .
\end{array} \right.\ \
\end{eqnarray*}
\end{lemma}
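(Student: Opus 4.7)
The plan is to mirror the proof of Lemma \ref{Lemma_distr_main1}, the only new wrinkle being that the shift $b\in\gf_q^*$ now feeds into the argument of $\nu$ and forces a careful bookkeeping of the $c$-cases.

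First I would separate off the term $a=0$ to recover a character-sum expression over all of $\gf_q$:
\begin{eqnarray*}
R_{Q,b}(\beta)
 = \sum_{a\in\gf_q}\sum_{x\in\gf_{q^s}}
      \omega_p^{\tr_{q/p}(a(Q(x)+\tr_{q^s/q}(\beta x)+b))} - q^s.
\end{eqnarray*}
For each fixed $x$, the inner sum on $a$ is $q$ or $0$ according to whether $Q(x)+\tr_{q^s/q}(\beta x)+b=0$ or not, so orthogonality of additive characters yields
\begin{eqnarray*}
R_{Q,b}(\beta)=q\,N_{Q,\beta}(-b)-q^s,
\end{eqnarray*}
where $N_{Q,\beta}(\zeta)$ is the counting function of Lemma \ref{Lemma_number_solution}. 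Thus the value distribution of $R_{Q,b}(\beta)$ is, up to the affine transformation $N\mapsto qN-q^s$, the value distribution of $N_{Q,\beta}(-b)$ as $\beta$ varies.

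Next I would substitute $\zeta=-b$ into Lemma \ref{Lemma_number_solution}. The ``trivial'' case $N_{Q,\beta}(-b)=q^{s-1}$ occurring $q^s-q^r$ times produces $R_{Q,b}(\beta)=0$ with the same multiplicity, matching the first row. For the remaining cases, as $c$ runs over $\gf_q$ there are $q^{r-1}+\epsilon\nu(c)q^{r/2-1}$ values of $\beta$ giving
\begin{eqnarray*}
R_{Q,b}(\beta)=\epsilon\,\nu(-b+c)\,q^{s-r/2}.
\end{eqnarray*}
Because $b\in\gf_q^*$, the value $\nu(-b+c)$ equals $q-1$ exactly for $c=b$ and equals $-1$ for $c\in\gf_q\setminus\{b\}$, so the $q$ possible $c$'s split into a singleton $\{b\}$ and its complement.

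The last step is routine accounting. The $c=b$ contribution gives $R_{Q,b}(\beta)=\epsilon(q-1)q^{s-r/2}$ with multiplicity $q^{r-1}+\epsilon\nu(b)q^{r/2-1}=q^{r-1}-\epsilon q^{r/2-1}$ (using $\nu(b)=-1$), producing the second row. All other $c$'s contribute $R_{Q,b}(\beta)=-\epsilon q^{s-r/2}$, and the total multiplicity is
\begin{eqnarray*}
\bigl(q^{r-1}+\epsilon(q-1)q^{r/2-1}\bigr)+\sum_{c\in\gf_q^*\setminus\{b\}}\bigl(q^{r-1}-\epsilon q^{r/2-1}\bigr),
\end{eqnarray*}
where the first summand handles $c=0$ (using $\nu(0)=q-1$) and the $q-2$ remaining terms handle $c\in\gf_q^*\setminus\{b\}$ (using $\nu(c)=-1$). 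Simplifying yields $q^r-q^{r-1}+\epsilon q^{r/2-1}$, matching the third row. The main obstacle is purely combinatorial: correctly isolating the $c=b$ case from the $c\in\gf_q^*\setminus\{b\}$ cases and keeping track of the $\nu(0)=q-1$ versus $\nu(c)=-1$ dichotomy so that the two subcases merge into a single clean count.
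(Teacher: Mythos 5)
Your proposal is correct and follows essentially the same route as the paper's proof: both reduce $R_{Q,b}(\beta)$ to $qN_{Q,\beta}(-b)-q^s$ via orthogonality over $a\in\gf_q$ and then read off the distribution of $N_{Q,\beta}(-b)$ from Lemma \ref{Lemma_number_solution}. The only difference is that you spell out the bookkeeping over $c$ (isolating $c=b$ and merging $c=0$ with $c\in\gf_q^*\setminus\{b\}$) which the paper states without detail, and your counts agree with the stated distribution.
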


\begin{proof}
It follows from the definition of  $R_{Q,b}(\beta)$ that
\begin{eqnarray}\label{eqn_S_N_QB2}
R_{Q,b}(\beta)&&=\sum_{a\in \gf_q}\sum_{x\in \gf_{q^s}}\omega_p^{\tr_{q/p}(a(Q(x)+\tr_{q^s/q}(\beta x)+b))}-q^s \nonumber\\
&&=qN_{Q,\beta}(-b)-q^s,
\end{eqnarray}
where $N_{Q,\beta}(-b)$ is the number of solutions to Equation (\ref{eqn_Q_B_zeta})
for $\zeta=-b$. According to Lemma  \ref{Lemma_number_solution} and the definition of $\nu(x)$,  the values of $N_{Q,\beta}(-b)$, as $\beta$
runs through $\gf_{q^s}$, have the following distribution
\begin{eqnarray*}
N_{Q,\beta}(-b)=\left\{
\begin{array}{ll}
q^{s-1}, &q^s-q^{r} \textrm{~times}\\
q^{s-1}+\epsilon(q-1) q^{s-r/2-1}, &q^{r-1}-\epsilon  q^{r/2-1}\textrm{~times} \\
q^{s-1}-\epsilon q^{s-r/2-1}, &q^{r}-q^{r-1}+\epsilon q^{r/2-1}\textrm{~times} .
\end{array} \right.\ \
\end{eqnarray*}
The conclusion  then follows from Equation (\ref{eqn_S_N_QB2})
and the value distribution of $N_{Q,\beta}(-b)$.
\end{proof}

\subsection{Cayley graphs and Hermitian forms graphs}

Let $G$ be a finite group and $D$ be a subset of $G$. A Cayley graph on $G$ with connection
set $D$, denoted by $\textrm{Cay}(G,D)$ is the directed graph with
vertex set $G$ and edge set $\{(x,y): ~xy^{-1}\in D\}$.   Let $\hat{G}$ be the
character group of $G$. For any $\chi\in \hat{G}$, define $\chi(D)=\sum_{x\in D}\chi(x)$.
The spectrum of a Cayley graph is the multiset consisting of the eigenvalues of its adjacency matrix.
It is well known that any two isomorphic Cayley graphs have the same spectrum.
When $G$ is abelian, the spectrum
of the Cayley graph $\textrm{Cay}(G,D)$ are completely determined by
the character sums over $D$.

\begin{lemma}\label{Lemma_graph_Li}(\cite{Li-Hu-Feng-Ge})
Let $G$ be a finite abelian group and $\Gamma=\textrm{Cay}(G,D)$ be a Cayley graph
on $G$ with connection set $D$. Then each character $\chi$ of $G$ corresponds to
an eigenvector for $A(\Gamma)$ with eigenvalue $\chi(D)$ where $A(\Gamma)$ is the adjacency
matrix of $\Gamma$, and the spectrum of $\Gamma$ is exactly the multiset $\{\chi(D): ~\chi\in \hat{G}\}$.
\end{lemma}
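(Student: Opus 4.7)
The plan is to construct, for each character $\chi\in\hat{G}$, an explicit eigenvector of the adjacency matrix $A(\Gamma)$ with the prescribed eigenvalue $\chi(D)$, and then to use the orthogonality relations for characters of a finite abelian group to conclude that these $|G|$ eigenvectors form a complete eigenbasis. This proves both assertions of the lemma at once: each character is an eigenfunction, and no other eigenvalues can occur because a basis is already exhausted.

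Concretely, I would first index the rows and columns of $A(\Gamma)$ by the elements of $G$, so that $A(\Gamma)_{x,y}=1$ precisely when $xy^{-1}\in D$ and $0$ otherwise. For each $\chi\in\hat{G}$, define the vector $v_\chi\in\mathbb{C}^{|G|}$ by $(v_\chi)_g=\chi(g)$. The core computation is
\begin{eqnarray*}
(A(\Gamma)v_\chi)_x = \sum_{y\in G:\,xy^{-1}\in D}\chi(y) = \sum_{d\in D}\chi(d^{-1}x) = \chi(x)\sum_{d\in D}\chi(d^{-1}),
\end{eqnarray*}
obtained via the substitution $y=d^{-1}x$ and the homomorphism property of $\chi$. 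Hence $v_\chi$ is an eigenvector of $A(\Gamma)$ with eigenvalue $\sum_{d\in D}\chi(d^{-1})$, which under the usual conventions for Cayley graphs coincides with $\chi(D)$.

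To finish, I would invoke the character orthogonality relation $\sum_{g\in G}\chi(g)\overline{\chi'(g)}=|G|\,\delta_{\chi,\chi'}$ to see that the family $\{v_\chi\}_{\chi\in\hat{G}}$ is orthogonal, hence linearly independent, in $\mathbb{C}^{|G|}$; since $|\hat{G}|=|G|$ for $G$ finite abelian, they form a basis, and therefore the spectrum of $A(\Gamma)$ is exactly the multiset $\{\chi(D):\chi\in\hat{G}\}$. The main (and essentially only) obstacle is a bookkeeping one at the step $\sum_{d\in D}\chi(d^{-1})$: depending on whether the edge relation is taken to be $xy^{-1}\in D$ or $x^{-1}y\in D$, and whether one insists that $D=D^{-1}$ (which is needed for $A(\Gamma)$ to be a symmetric real matrix), the eigenvalue appears naturally as $\chi(D)$ or as $\chi(D^{-1})=\overline{\chi(D)}$. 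The two resulting multisets coincide as $\chi$ ranges over $\hat{G}$ (via $\chi\mapsto\overline{\chi}$), so the statement of the lemma is convention-independent once these details are fixed.
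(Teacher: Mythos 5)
Your proof is correct: it is the standard argument that the characters of a finite abelian group form an orthogonal eigenbasis for the adjacency matrix of any Cayley graph on that group, and your handling of the $\chi(D)$ versus $\chi(D^{-1})=\overline{\chi(D)}$ convention issue is exactly the right remark (in the paper's applications the connection sets $\mathcal{D}$ and $\mathcal{K}$ are symmetric, so even the pointwise eigenvalues agree). The paper itself states this lemma as a quoted result from the cited reference and gives no proof, so there is nothing to compare beyond noting that your argument is the standard one underlying that citation.
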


We now give a brief introduction to Hermitian forms graphs. Before doing this, we recall
 the notation of Hermitian matrix.
For any $x\in \gf_{q^2}$, its conjugate $\bar{x}$ is defined to be $\bar{x}=x^q$. A matrix
$H$ over $\gf_{q^2}$ is called Hermitian if $H=H^*$ where $H^*$ denotes the conjugate transpose of $H$.
Let $\mathcal{H}$ denote the set of all   Hermitian matrices of order $m$ over $\gf_{q^2}$. Then $\mathcal{H}$ is 
a finite abelian group under the
operation of matrix addition. Let $V=\gf_{q^2}^m$.  A Hermitian forms graph on $V$
is the graph whose vertices are the elements of $\mathcal{H}$ and in which $H_1,H_2\in \mathcal{H}$
are adjacent if and only if ${\textrm{rank}}(H_1-H_2)=1$. Thus, the Hermitian forms graph on $V$
is exactly the Caylay graph ${\textrm{Cay}}(\mathcal{H},\mathcal{K})$ on the abelian group $\mathcal{H}$
where $\mathcal{K}=\{H\in \mathcal{H}: ~{\textrm{rank}}(H)=1\}$.

The following result was stated in \cite{Li-Hu-Feng-Ge} without a detailed proof. For completeness,
we report it here and present a full proof for it.

\begin{lemma}\label{Lemma_size_H}
Let $\mathcal{H}$ and $\mathcal{K}$ be defined as above. Then $|\mathcal{H}|=q^{m^2}$ and $|\mathcal{K}|=(q^{2m}-1)/(q+1)$.
\end{lemma}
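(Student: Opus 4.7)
The plan is to handle the two cardinality claims separately, since they require different ingredients.

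For $|\mathcal{H}|$ I would argue by simply parametrizing a Hermitian matrix $H = (h_{ij})_{1 \leq i,j \leq m}$ by its entries. The defining condition $H = H^*$ forces $h_{ji} = h_{ij}^q$, so the diagonal entries satisfy $h_{ii} = h_{ii}^q$, i.e.\ $h_{ii} \in \gf_q$, and the lower triangle is determined by the upper triangle. Hence $H$ is uniquely described by $m$ free diagonal entries from $\gf_q$ together with $\binom{m}{2}$ free strictly-upper-triangular entries from $\gf_{q^2}$, giving $q^m \cdot (q^2)^{\binom{m}{2}} = q^{m+m(m-1)} = q^{m^2}$.

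For $|\mathcal{K}|$ the key step is to give a clean parametrization of rank-one Hermitian matrices. Any $m \times m$ matrix of rank one over $\gf_{q^2}$ can be written as $uv^*$ with nonzero $u,v \in \gf_{q^2}^m$; imposing $uv^* = (uv^*)^* = vu^*$ and comparing column spans forces $v = \bar{c}\,u$ for some $c \in \gf_{q^2}^*$, so every element of $\mathcal{K}$ has the form $c\,uu^*$ with $u \in \gf_{q^2}^m \setminus\{0\}$ and $c \in \gf_{q^2}^*$. The Hermitian condition $(c\,uu^*)^* = \bar{c}\,uu^* = c\,uu^*$ then further forces $c \in \gf_q^*$. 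Conversely every such expression $\mu\,uu^*$ with $\mu \in \gf_q^*$ and $u \neq 0$ lies in $\mathcal{K}$.

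The remaining task is to compute the fiber size of the surjection
\[
\Phi : \gf_q^* \times \bigl(\gf_{q^2}^m \setminus \{0\}\bigr) \longrightarrow \mathcal{K}, \qquad (\mu, u) \mapsto \mu\,uu^*.
\]
If $\mu_1 u_1 u_1^* = \mu_2 u_2 u_2^*$, both matrices have the same column space $\gf_{q^2}\,u_1 = \gf_{q^2}\,u_2$, so $u_2 = \alpha u_1$ for some $\alpha \in \gf_{q^2}^*$; substituting yields $\mu_2 = \mu_1/N(\alpha)$, where $N \colon \gf_{q^2}^* \to \gf_q^*$ is the norm map. Since $N$ is surjective with kernel of order $q+1$, for each of the $q^2-1$ choices of $\alpha$ the required $\mu_2 \in \gf_q^*$ exists and is uniquely determined, and conversely every such $\alpha$ produces a valid preimage. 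Thus each fiber of $\Phi$ has exactly $q^2-1$ elements, whence
\[
|\mathcal{K}| \;=\; \frac{(q-1)(q^{2m}-1)}{q^2-1} \;=\; \frac{q^{2m}-1}{q+1}.
\]
The main (and really only) obstacle is the parametrization step: one must be careful to show both that every rank-one Hermitian matrix factors as $\mu uu^*$ and that no spurious factorizations with $u$ outside the column span can occur; once this is in hand, the fiber count via the norm map is automatic.
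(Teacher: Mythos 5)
Your proof is correct, and for $|\mathcal{H}|$ it is identical to the paper's count (diagonal entries in $\gf_q$, free upper triangle in $\gf_{q^2}$). For $|\mathcal{K}|$ you follow the same basic strategy — parametrize rank-one Hermitian matrices by outer products and count fiber sizes — but with a slightly different bookkeeping: the paper asserts, citing ``linear algebra,'' that every rank-one $H\in\mathcal{H}$ is exactly of the form $\mathbf{u}^*\mathbf{u}$ and then counts the $q+1$ scalars $c$ with $c^{q+1}=1$, whereas you keep an explicit scalar and show every element of $\mathcal{K}$ is $\mu\,uu^*$ with $\mu\in\gf_q^*$, then count fibers of size $q^2-1$ of the map $(\mu,u)\mapsto \mu\,uu^*$, giving $(q-1)(q^{2m}-1)/(q^2-1)=(q^{2m}-1)/(q+1)$. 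Your variant has the small advantage of actually proving the factorization step (which the paper leaves uncited in detail, and which in the paper's normalized form $H=\mathbf{u}^*\mathbf{u}$ silently uses surjectivity of the norm $\gf_{q^2}^*\to\gf_q^*$ — a genuinely finite-field fact, false over the reals); your fiber count never needs that surjectivity. Both routes are correct and yield the same answer.
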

\begin{proof}
For any $H=(h_{ij})_{m\times m}\in \mathcal{H}$, it follows from $H=H^*$ that $h_{i,i}=h_{i,i}^q$ for each $1\leq i\leq m$.
This implies that $h_{i,i}\in \gf_{q}$ for each $1\leq i\leq m$. Note that for $i\neq j$, $h_{ij}$
can be any element of $\gf_{q^2}$ and $h_{ij}=h_{ji}$. Thus we have
\begin{eqnarray*}
|\mathcal{H}|=q^{m}q^{2(1+2+\cdots+m-1)}=q^{m^2}.
\end{eqnarray*}

We then prove $|\mathcal{K}|=(q^{2m}-1)/(q+1)$. From the theory of
linear algebra, it is known that a matrix $H\in \mathcal{H}$ has rank $1$ if and only if there is a nonzero
vector ${\bf u}=(u_1,u_2,\cdots,u_{m})\in V$ such that $H={\bf u}^*{\bf u}$, where $V=\gf_{q^2}^m$ and ${\bf u}^*$ is the Hermitian transpose of ${\bf u}$.
Let ${\bf u}$ and  ${\bf{v}}$ be two nonzero vectors in $V$.
Note that ${\bf u}^*{\bf u}={\bf v}^*{\bf v}$ if and only if
${\bf v}=c {\bf u}$ for some $c\in \gf^*_{q^2}$. It then follows that $c^{q+1}=1$.
Thus there are  $q+1$ ${\bf v}$'s such that ${\bf u}^*{\bf u}={\bf v}^*{\bf v}$ for any given
nonzero vector ${\bf u}$ in $V$. This together with the fact that there are totally $q^{2m}-1$ nonzero vectors in $V$
leads to  $|\mathcal{K}|=(q^{2m}-1)/(q+1)$.
\end{proof}

\vspace{2mm}

Hermitian forms graphs have been well studied (see \cite{Brouswer} and \cite{Stanton} for details).
It is known that Hermitian forms graphs on $\gf_{q^2}^m$ form a class of distance regular graphs and have the following
spectrum distribution.

\vspace{2mm}

\begin{lemma}\label{Lemma_graph_spectrum}(\cite{Brouswer})
Let $V=\gf_{q^2}^m$. Then the Hermitian forms graph over $V$, i.e., the Caylay graph ${\textrm{Cay}}(\mathcal{H}, \mathcal{K})$,
has the following eigenvalues
\begin{eqnarray*}
\xi_0={(q^{2m}-1)/ (q+1)}, ~\xi_j=((-q)^{2m-j}-1)/(q+1)~\textrm{~for~}1\leq j\leq m,
\end{eqnarray*}
\end{lemma}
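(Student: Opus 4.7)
The plan is to apply Lemma \ref{Lemma_graph_Li}: the spectrum of $\textrm{Cay}(\mathcal{H},\mathcal{K})$ is the multiset $\{\chi_B(\mathcal{K}) : B \in \mathcal{H}\}$ once I parameterize the characters of the additive group $\mathcal{H}$ by $B \in \mathcal{H}$. The pairing $(B,H) \mapsto \textrm{tr}(BH)$ on $\mathcal{H}\times\mathcal{H}$ takes values in $\gf_q$ (because $(BH)^{*}=HB$ has the same trace as $BH$, forcing the trace to be Frobenius-fixed), is symmetric and $\gf_q$-bilinear, and a short check against the obvious $\gf_q$-basis of $\mathcal{H}$ shows it is non-degenerate. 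This identifies all $q^{m^2}$ characters of $\mathcal{H}$ as $\chi_B(H) = \omega_p^{\tr_{q/p}(\textrm{tr}(BH))}$ for $B \in \mathcal{H}$.

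Next I would translate $\chi_B(\mathcal{K})$ into a sum over $V \setminus \{{\bf 0}\}$. By the proof of Lemma \ref{Lemma_size_H}, each $H \in \mathcal{K}$ has the form ${\bf u}^{*}{\bf u}$ for some nonzero ${\bf u} \in V$, and exactly $q+1$ vectors ${\bf u}$ yield the same $H$. Substituting, and using $\textrm{tr}(B {\bf u}^{*} {\bf u}) = {\bf u} B {\bf u}^{*}$, gives
\[
\chi_B(\mathcal{K}) \;=\; \frac{1}{q+1}\bigl(T(B)-1\bigr), \quad \text{where} \quad T(B):=\sum_{{\bf u}\in V}\omega_p^{\tr_{q/p}({\bf u}B{\bf u}^{*})}.
\]
The heart of the argument is to show $T(B)=(-q)^{2m-j}$, where $j=\rank(B)$. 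Using the classical diagonalization of Hermitian forms over $\gf_{q^2}$, I may replace $B$ by $P B P^{*}$ with $P$ invertible, and hence assume $B=\textrm{diag}(d_1,\ldots,d_j,0,\ldots,0)$ with each $d_i\in \gf_q^{*}$; the substitution ${\bf u}\mapsto{\bf u}P$ is a bijection of $V$ that preserves $T(B)$. Thus $T(B)$ factors as $q^{2(m-j)} \prod_{i=1}^{j} f(d_i)$, where $f(d)=\sum_{w\in\gf_{q^2}}\omega_p^{\tr_{q/p}(dw^{q+1})}$. For $d\in\gf_q^{*}$, the map $w\mapsto w^{q+1}$ is the norm from $\gf_{q^2}^{*}$ onto $\gf_q^{*}$ and is $(q+1)$-to-$1$, so $f(d)=1+(q+1)\sum_{z\in \gf_q^{*}}\omega_p^{\tr_{q/p}(dz)}=1-(q+1)=-q$. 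Therefore $T(B)=q^{2(m-j)}(-q)^{j}=(-q)^{2m-j}$, and $\chi_B(\mathcal{K})=((-q)^{2m-j}-1)/(q+1)$; this is $\xi_j$ for $j=\rank(B)\in\{1,\ldots,m\}$, while $B=0$ contributes $\xi_0=(q^{2m}-1)/(q+1)$.

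The main obstacles are two classical facts I would want to cite or briefly verify: the diagonalizability of Hermitian matrices over $\gf_{q^2}$ via $*$-congruence with diagonal entries in $\gf_q^{*}$, and the non-degeneracy of the trace pairing on $\mathcal{H}$. Both are standard, but they are precisely what make the eigenvalue index $j$ correspond cleanly to the rank of the character parameter $B$. Once they are in place, the entire lemma collapses to the single one-variable norm sum $f(d)=-q$, which drives the whole computation.
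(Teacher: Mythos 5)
Your argument is correct, and it is genuinely different from what the paper does: the paper offers no proof of this lemma at all, simply citing Brouwer--Cohen--Neumaier, where the spectrum of the Hermitian forms graph is obtained from the machinery of distance-regular graphs and $q$-Krawtchouk polynomials. You instead give a self-contained computation in the spirit of the paper's own Lemma \ref{Lemma_graph_Li}: identify $\hat{\mathcal{H}}$ with $\mathcal{H}$ via the trace pairing $(B,H)\mapsto \mathrm{tr}(BH)$ (your verification that this lands in $\gf_q$ and is non-degenerate is fine), use the $(q+1)$-to-one parameterization $H={\bf u}^{*}{\bf u}$ of rank-one Hermitian matrices already established in Lemma \ref{Lemma_size_H} to write $\chi_B(\mathcal{K})=\bigl(T(B)-1\bigr)/(q+1)$, and then evaluate $T(B)$ by $*$-diagonalizing $B$ and reducing to the norm-type Gauss sum $f(d)=\sum_{w\in\gf_{q^2}}\omega_p^{\tr_{q/p}(dw^{q+1})}=-q$, giving $T(B)=(-q)^{2m-\rank(B)}$; since Hermitian matrices of every rank $0\le j\le m$ exist, all the stated $\xi_j$ occur. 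The two classical ingredients you flag (diagonalization of Hermitian forms over $\gf_{q^2}$ by $*$-congruence, non-degeneracy of the trace pairing) are indeed standard and easily cited. What your route buys is a short elementary proof of the eigenvalue list that needs no graph-theoretic input beyond Lemma \ref{Lemma_graph_Li}; what the citation buys the paper, beyond the eigenvalues, is the explicit multiplicities $f_j$ of (\ref{eqn_f}), which in your framework would correspond to counting Hermitian matrices of rank $j$ --- an extra (classical but nontrivial) count that your argument does not supply, though it is not required for the statement as delimited here.
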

and the frequencies of these eigenvalues are given by
\begin{eqnarray}\label{eqn_f}
f_0=1, ~~~f_j={\left[ {\begin{array}{*{10}c}
   {m}  \\
   j \\
\end{array}}\right ]}_{(-q)}\prod_{\ell=0}^{j-1}((-1)^{m+1}q^m+(-1)^{\ell+1}q^{\ell}) ~\textrm{~for~}1\leq j\leq m,
\end{eqnarray}
where 
\begin{eqnarray*}
{\left[ {\begin{array}{*{10}c}
   {m}  \\
   j \\
\end{array}}\right ]}_{\ell}=\prod_{i=0}^{j-1}{\ell^m-\ell^i\over \ell^j-\ell^i}
\end{eqnarray*}
denotes the Gaussian binomial coefficients with  basis $\ell\neq 1$.

We shall  point out that the results in Lemma \ref{Lemma_graph_spectrum} hold for both even and odd $m$.
Hence Lemma \ref{Lemma_graph_spectrum} is applicable  no matter $m$ is even or odd.

\section{The weight distributions of $\C_{(q,m)}$, $\D_{(q,m)}$, and $\E_{(q,m)}$ for even $m$}\label{sec-wteven}

\subsection{The weight distribution of $\D_{(q,m)}$ for even $m$}

Let $s=2m=4t$. Recall that the parity-check  polynomial of $\D_{(q,m)}$ is given by (\ref{eqn_parity_check_h}).
Using the well-known Delsarte's Theorem \cite{Delsarte}, the code   $\D_{(q,m)}$ can be expressed as
\begin{eqnarray}\label{eqn_D_qm}
\D_{(q,m)}=\{{\bf{c}}_{(\beta,\lambda_1,\lambda_2,\cdots,\lambda_t)}: ~\beta, \lambda_1, \lambda_2 ,\cdots,\lambda_{t}\in \gf_{q^s}\}
\end{eqnarray}
in which the codeword
\begin{eqnarray}\label{eqn_def_c}
{\bf{c}}_{(\beta,\lambda_1,\lambda_2,\cdots,\lambda_t)}=\left(\tr_{q^s/q}(\beta \pi^i)+\sum_{j=1}^{t}\tr_{q^s/q}(\lambda_j \pi^{(q^{2j-1}+1)i})\right)_{i=0}^{q^s-2}.
\end{eqnarray}
For the sake of simplicity, we denote $\Lambda=(\lambda_1,\lambda_2,\cdots,\lambda_t)\in \gf_{q^s}^t$ and
\begin{eqnarray}\label{eqn_Q_lambda}
Q_{\Lambda}(x)=\sum_{j=1}^{t}\tr_{q^s/q}(\lambda_j x^{q^{2j-1}+1}), ~x\in \gf_{q^s}.
\end{eqnarray}
It is easy to check that $Q_{\Lambda}(x)$ is a quadratic form from $\gf_{q^s}$ to $\gf_q$.

In terms of exponential sums,  the Hamming weight of the codeword ${\bf{c}}_{(\beta,\lambda_1,\lambda_2,\cdots,\lambda_t)}$ of (\ref{eqn_def_c})
is equal to
\begin{eqnarray}\label{eqn_Hamming_weight}
&&\textrm{WT}({\bf{c}_{(\beta,\lambda_1,\lambda_2,\cdots,\lambda_t)}})\nonumber\\
&&~~~~~~=q^s-1-|\{0\leq i\leq q^s-2: ~Q_{\Lambda}(\pi^i)+\tr_{q^s/q}(\beta \pi^i)=0\}|\nonumber\\
&&~~~~~~=q^s-1-{1\over q}\sum_{a\in \gf_q}\sum_{x\in \gf^*_{q^s}}\omega_p^{\tr_{q/p}(a(Q_{\Lambda}(x)+\tr_{q^s/q}(\beta x)))}\nonumber\\
&&~~~~~~=q^s-1-{1\over q}\left(\sum_{a\in \gf^*_q}\sum_{x\in \gf_{q^s}}\omega_p^{\tr_{q/p}(a(Q_{\Lambda}(x)+\tr_{q^s/q}(\beta x)))}+q^{s}-q\right)\nonumber\\
&&~~~~~~=q^s-q^{s-1}-{1\over q}S_{Q_{\Lambda}}(\beta),
\end{eqnarray}
where
\begin{eqnarray*}
S_{Q_{\Lambda}}(\beta)=\sum_{a\in \gf^*_q}\sum_{x\in \gf_{q^s}}\omega_p^{\tr_{q/p}(a(Q_{\Lambda}(x)+\tr_{q^s/q}(\beta x)))}.
\end{eqnarray*}

Thus we only need to determine the value distribution of $S_{Q_{\Lambda}}(\beta)$.
Note that the function $Q_{\Lambda}(x)$ of (\ref{eqn_Q_lambda}) is a quadratic form in $s$ variables over $\gf_q$.
According to Lemma \ref{Lemma_distr_main1}, it is sufficient to calculate the rank distribution of $Q_{\Lambda}(x)$ as $\Lambda$ runs through
$\gf_{q^s}^t$. This can be achieved by using the following result.

\begin{lemma}\label{Lemma_graph_rank}
Let $\mathcal{G}=(\gf_{q^{s}}^t,+)$ and $\mathcal{D}$ be a subset of the abelian group $\mathcal{G}$ given by
\begin{eqnarray*}
\mathcal{D}=\left\{\left(x^{q+1},x^{q^3+1},\cdots,x^{q^{m-1}+1}\right): ~x\in \gf_{q^s}^*\right\}.
\end{eqnarray*}
Let $\mathcal{H}$ be the set of all Hermitian matrices of order $m$ over $\gf_{q^2}$ and $\mathcal{K}$
be a subset formed by the matrices in $\mathcal{H}$ with rank 1.  Then the Cayley graph $(\mathcal{G},\mathcal{D})$
is isomorphic to  the Cayley graph $(\mathcal{H},\mathcal{K})$ .
\end{lemma}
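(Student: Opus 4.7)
The plan is to exhibit an explicit group isomorphism $\phi:\mathcal{G}\to\mathcal{H}$ with $\phi(\mathcal{D})=\mathcal{K}$; since both groups are abelian, this is all that is needed for the Cayley--graph isomorphism. The map $\phi$ is built from the quadratic forms $Q_\Lambda$ of (\ref{eqn_Q_lambda}) after identifying $\gf_{q^{2m}}$ with $\gf_{q^2}^m$ via a fixed $\gf_{q^2}$-basis $\{e_1,\ldots,e_m\}$.

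Writing $y=\sum_i y_i e_i$ with $y_i\in\gf_{q^2}$ and using $y_i^{q^{2j-1}}=y_i^q$, I would expand
\[
Q_\Lambda(y)=\tr_{q^2/q}\!\Biggl(\sum_{i,k}y_i^q\,y_k\,M_{ik}\Biggr),\qquad M_{ik}:=\tr_{q^{2m}/q^2}\!\Bigl(\sum_{j=1}^{t}\lambda_j\,e_i^{q^{2j-1}}e_k\Bigr)\in\gf_{q^2},
\]
and then open the outer trace and swap $i\leftrightarrow k$ in the conjugate term to obtain $Q_\Lambda(y)=\mathbf{y}^*\tilde{M}_\Lambda\mathbf{y}$ with $\tilde{M}_\Lambda=(M_{ik}+M_{ki}^q)_{i,k}$, which is manifestly Hermitian. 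Set $\phi(\Lambda)=\tilde{M}_\Lambda$; this map is plainly $\gf_q$-linear, and since $\dim_{\gf_q}\mathcal{G}=2m\cdot t=m^2=\dim_{\gf_q}\mathcal{H}$, it is enough to check injectivity. I would view $Q_\Lambda$ as the reduced polynomial $\sum_{j,k}\lambda_j^{q^k}\,y^{q^{2j-1+k}+q^k}$ of degree $<q^{2m}$ in $y$, and observe via the uniqueness of $q$-adic expansions that each unordered pair $\{A,B\}\subseteq\{0,\ldots,2m-1\}$ of mixed parity arises from exactly one index $(j,k)$. This relies crucially on $m$ being even: the sets $\{2j-1:1\le j\le m/2\}=\{1,3,\ldots,m-1\}$ and $\{2m-(2j-1):1\le j\le m/2\}=\{m+1,m+3,\ldots,2m-1\}$ are disjoint and together exhaust the odd residues modulo $2m$. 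Hence $Q_\Lambda\equiv 0$ forces each $\lambda_j^{q^k}=0$, so $\Lambda=0$ and $\phi$ is an isomorphism.

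The central step is $\phi(\mathcal{D})=\mathcal{K}$. For $\Lambda=(x^{q+1},x^{q^3+1},\ldots,x^{q^{m-1}+1})\in\mathcal{D}$ the substitution $z=xy$ collapses $Q_\Lambda(y)$ to $\tr_{q^{2m}/q}\bigl(\sum_{j=1}^{m/2}z^{q^{2j-1}+1}\bigr)$, so the heart of the argument is the identity
\[
\tr_{q^{2m}/q}\!\Biggl(\sum_{j=1}^{m/2}z^{q^{2j-1}+1}\Biggr)=\bigl(\tr_{q^{2m}/q^2}(z)\bigr)^{q+1}\qquad(z\in\gf_{q^{2m}}).
\]
I would verify this by expanding both sides as polynomials in $z$: the right-hand side equals $\bigl(\sum_{k=0}^{m-1}z^{q^{2k}}\bigr)\bigl(\sum_{\ell=0}^{m-1}z^{q^{2\ell+1}}\bigr)$ and hence collects every monomial $z^{q^A+q^B}$ with $A$ even and $B$ odd exactly once, while the left-hand side is $\sum_{i,j}z^{q^{2j-1+i}+q^i}$, and the combinatorial observation from the previous paragraph shows it produces each unordered mixed-parity pair exactly once as well. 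Consequently $Q_\Lambda(y)=(\tr_{q^{2m}/q^2}(xy))^{q+1}$ is the Hermitian norm of the $\gf_{q^2}$-linear functional $y\mapsto\tr_{q^{2m}/q^2}(xy)$, so $\phi(\Lambda)=\mathbf{u}^*\mathbf{u}$ with $\mathbf{u}=\bigl(\tr_{q^{2m}/q^2}(xe_i)\bigr)_{i=1}^{m}$ is of rank $1$, giving $\phi(\mathcal{D})\subseteq\mathcal{K}$. The inclusion upgrades to equality by the injectivity of $\phi$ together with the cardinality count $|\mathcal{D}|=(q^{2m}-1)/(q+1)=|\mathcal{K}|$ (Lemma~\ref{Lemma_size_H}), where $|\mathcal{D}|$ follows by noting that the fibers of $x\mapsto(x^{q+1},\ldots,x^{q^{m-1}+1})$ on $\gf_{q^{2m}}^{*}$ are precisely the cosets of the subgroup $\{c\in\gf_{q^2}^{*}:c^{q+1}=1\}$ of order $q+1$.

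The main obstacle will be the exponent bookkeeping behind the trace--norm identity: one must verify that, with $m$ even, the pairs $(2j-1+i,\,i)$ with $1\le j\le m/2$ and $0\le i\le 2m-1$ enumerate every mixed-parity unordered pair in $\{0,\ldots,2m-1\}$ exactly once, which is precisely the combinatorial fact that controls both the injectivity of $\phi$ and the shape of $\phi(\mathcal{D})$.
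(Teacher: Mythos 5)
Your proof is correct, and while it follows the same overall blueprint as the paper (construct an explicit group isomorphism, verify injectivity, show one connection set lands inside the other, then upgrade to equality by the count $|\mathcal{D}|=|\mathcal{K}|=(q^{2m}-1)/(q+1)$), the execution is genuinely different. The paper maps in the opposite direction, $f:\mathcal{H}\to\mathcal{G}$, $f(H)=(\alpha^{(q)}H\alpha^T,\ldots,\alpha^{(q^{2t-1})}H\alpha^T)$ for a basis $\alpha$ of $\gf_{q^{2m}}$ over $\gf_{q^2}$; injectivity is proved by conjugating the relations $\alpha^{(q^{2i-1})}H\alpha^T=0$ to get all $m$ odd powers and then invoking the nonvanishing of a Moore-type determinant (Corollary 2.38 in \cite{Niddle}), and the connection-set correspondence is a one-line evaluation of $f$ on rank-one matrices $H=\mathbf{v}^T\mathbf{v}^{(q)}$, giving $(x^{q+1},\ldots,x^{q^{m-1}+1})$ with $x=\mathbf{v}^{(q)}\alpha^T$. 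You instead send $\Lambda$ to the Gram (Hermitian) matrix $\tilde M_\Lambda$ of $Q_\Lambda$ in an $\gf_{q^2}$-basis, prove injectivity by comparing coefficients of the reduced polynomial $\sum_{j,k}\lambda_j^{q^k}y^{q^{2j-1+k}+q^k}$ (the mixed-parity pair bookkeeping, valid precisely because $m$ is even, as this lemma is used only in the even-$m$ section), and handle the connection sets via the identity $\tr_{q^{2m}/q}\bigl(\sum_{j=1}^{m/2}z^{q^{2j-1}+1}\bigr)=\bigl(\tr_{q^{2m}/q^2}(z)\bigr)^{q+1}$, which exhibits $Q_\Lambda$ for $\Lambda\in\mathcal{D}$ as a Hermitian norm of a linear functional; your fiber argument also supplies a proof of $|\mathcal{D}|=(q^{2m}-1)/(q+1)$, which the paper asserts without detail. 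What your route buys is that it bypasses the Moore determinant and makes the link to the quadratic forms $Q_\Lambda$ (used later in Lemma \ref{Lemma_T}) completely explicit; what the paper's route buys is a shorter verification on the connection set. One step you should state explicitly: to pass from the equality of forms $\mathbf{y}^*\tilde M_\Lambda\mathbf{y}=\mathbf{y}^*(\mathbf{u}^*\mathbf{u})\mathbf{y}$ to the equality of matrices $\phi(\Lambda)=\mathbf{u}^*\mathbf{u}$ (and hence rank one, using $\mathbf{u}\neq 0$ since $x\neq 0$ and the trace form is nondegenerate), you need the uniqueness of the Hermitian matrix representing a Hermitian form; this follows by the standard polarization argument, substituting $\mathbf{y}=e_i+ce_j$ and using nondegeneracy of $\tr_{q^2/q}$, so it is a harmless but necessary addition.
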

\begin{proof}
Recall that $s=2m$. It then follows from the definitions of $\mathcal{G}$ and $\mathcal{D}$ that
$|\mathcal{G}|=q^{st}=q^{m^2}$ and $|\mathcal{D}|=(q^{2m}-1)/(q+1)$.
By Lemma \ref{Lemma_size_H}, we have $|\mathcal{G}|=|\mathcal{H}|$ and $|\mathcal{D}|=|\mathcal{K}|$.
We now find an isomorphism $f$ from  $\mathcal{G}$ to $\mathcal{H}$ satisfying $f(\mathcal{D})=\mathcal{K}$.
Let ${\alpha_1,\alpha_2,\cdots,\alpha_m}$ be a basis of $\gf_{q^{s}}$ over $\gf_{q^2}$. Define
${\alpha}=(\alpha_1,\alpha_2,\cdots,\alpha_m)$ and ${\alpha}^{(i)}=(\alpha^i_1,\alpha^i_2,\cdots,\alpha^i_m)$ for any positive integer $i$.
Then $\alpha^{(i)}\in  \gf_{q^{s}}^m$ for any $i$. Based on $\alpha$, we define a map $f$ from $\mathcal{H}$ to $\mathcal{G}$
by sending $H\in \mathcal{H}$ to
\begin{eqnarray}\label{eqn_isomophorsm}
f(H)=(\alpha^{(q)} H \alpha^T, \alpha^{(q^3)} H \alpha^T, \cdots,  \alpha^{(q^{2t-1})} H \alpha^T),
\end{eqnarray}
where $\alpha^T$ is the transpose of the vector $\alpha$.
Clearly, $f(H_1+H_2)=f(H_1)+f(H_2)$. Thus, $f$ is
a homomorphism. We now prove that $f$ is injective. Suppose that $f(H)=(0,0,\cdots,0)$ for some matrix $H=(h_{jk})_{m\times m}$.
It then follows from (\ref{eqn_isomophorsm}) that $\alpha^{(q^{2i-1})} H \alpha^T=0$ for each $1\leq i\leq t$.
We rewrite $\alpha^{(q^{2i-1})} H \alpha^T$ as
$\alpha^{(q^{2i-1})}(h_{jk})_{m\times m}\alpha^T$. Note that the elements of $\alpha$ belong to $\gf_{q^{2m}}$ and the elements of $H$
belong to $\gf_{q^2}$. By raising $q^{2m-2i+1}$ powers on both sides of $\alpha^{(q^{2i-1})}(h_{jk})_{m\times m}\alpha^T=0$, we have
\begin{eqnarray*}
\alpha(h^{q^{2m-2i+1}}_{jk})_{m\times m}(\alpha^{(q^{2m-2i+1})})^T=\alpha(h^{q}_{jk})_{m\times m}(\alpha^{(q^{2m-2i+1})})^T=0.
\end{eqnarray*}
This together with the fact that $H$ is  Hermitian leads to
\begin{eqnarray*}
\alpha^{(^{q^{2m-2i+1}})}H\alpha^T=0,
\end{eqnarray*}
where $1\leq i\leq t$. Thus we have shown that $\alpha^{(^{q^{2i-1}})}H\alpha^T=0$ for each $1\leq i\leq m$. That is,
\begin{eqnarray}\label{eqn_matrix}
\left(
    \begin{array}{cccc}
     \beta_1 &\beta_2 & \cdots &\beta_m \\
      \beta_1^{q^{2}} & \beta_2^{q^{2}} & \cdots & \beta_m^{q^{2}} \\
      \vdots & \vdots & \ddots & \vdots \\
      \beta_1^{q^{2(m-1)}} &\beta_2^{q^{2(m-1)}}&\cdots &\beta_{m}^{q^{2(m-1)}} \\
    \end{array}
  \right)H \alpha =
          \left(
            \begin{array}{c}
             0 \\
              0 \\
              \vdots \\
              0 \\
            \end{array}
          \right),
\end{eqnarray}
where $\beta_i=\alpha_i^q$ for each $1\leq i\leq m$. Note that $\alpha_1,\alpha_2,\cdots,\alpha_m$ is a basis
of $\gf_{q^{2m}}$ over $\gf_{q^2}$. It is easy to show that $\beta_1,\beta_2,\cdots,\beta_m$
are also a basis of $\gf_{q^{2m}}$ over $\gf_{q^2}$. It then follows from (\ref{eqn_matrix}) that $H\alpha^T=0$  since the determinant
of its  left matrix is not zero thanks to  Corollary 2.38 in \cite{Niddle}. Thus $H$ must be a zero matrix.
This means that $f$ is injective, and further means that $f$ is an isomorphism from $\mathcal{H}$ to $\mathcal{G}$
since $|\mathcal{G}|=|\mathcal{H}|$.

Finally, we show that $f$ maps $\mathcal{K}$ to $\mathcal{D}$. Note that for any $H\in \mathcal{K}$,
there is a vector ${\bf v}=(v_1,v_2,\cdots,v_m)\in \gf^m_{q^2}$ such that
$H={\bf v}{\bf v}^*={\bf v}^T{\bf v}^{(q)}$ where ${\bf v}^*$ is the Hermitian transpose of ${\bf v}$. It then follows that
\begin{eqnarray*}
f(H)&&=(\alpha^{(q)} {\bf v}^T{\bf v}^{(q)} \alpha^T, \alpha^{(q^3)} {\bf v}^T{\bf v}^{(q)} \alpha^T, \cdots,  \alpha^{(q^{2t-1})}{\bf v}^T{\bf v}^{(q)} \alpha^T)\\
&&=(x^{q+1}, x^{q^3+1}, \cdots, x^{q^{m-1}+1})\in \mathcal{D},
\end{eqnarray*}
where $x={\bf v}^{(q)}\alpha^T$.
Since $f$ is injective and $|\mathcal{D}|=|\mathcal{K}|$, we have $f(\mathcal{K})=\mathcal{D}$. This completes the proof.
\end{proof}

\begin{remark}
The idea behind the proof of Lemma \ref{Lemma_graph_rank} is inherited from
the proof of Lemma 4.1 in \cite{Li-Hu-Feng-Ge} where the case $q=p$ and $m$ being odd is settled.
\end{remark}

\begin{lemma}\label{Lemma_T}
Let $Q_{{\Lambda}}(x)$ be the quadratic form defined by (\ref{eqn_Q_lambda}) and
\begin{eqnarray}\label{eqn_TQ_lambda}
T_{Q_\Lambda}=\sum_{x\in \gf_{q^s}}\omega_p^{\tr_{q/p}(Q_{\Lambda}(x))}.
\end{eqnarray}
Then, as $\Lambda$ runs through $\gf_{q^s}^t$, the value distribution of the exponential sum $T_{Q_\Lambda}$
is given by
\begin{eqnarray*}
T_{Q_\Lambda}=(-1)^jq^{2m-j} \textrm{~occurring~}~f_j~\textrm{~times},
\end{eqnarray*}
where $0\leq j\leq m$ and $f_j$ is given by (\ref{eqn_f}).
\end{lemma}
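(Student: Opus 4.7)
The plan is to recognize $T_{Q_\Lambda}$ as (essentially) a character sum of the connection set $\mathcal{D}$ of the Cayley graph $(\mathcal{G},\mathcal{D})$ from Lemma \ref{Lemma_graph_rank}, and then transport the calculation to the Hermitian forms graph via the isomorphism of that lemma, where Lemma \ref{Lemma_graph_spectrum} supplies both the eigenvalues and their multiplicities. First I would isolate the contribution of $x=0$, writing
\begin{eqnarray*}
T_{Q_\Lambda}=1+\sum_{x\in\gf_{q^s}^*}\omega_p^{\tr_{q/p}(\sum_{j=1}^{t}\tr_{q^s/q}(\lambda_j x^{q^{2j-1}+1}))}.
\end{eqnarray*}
Then I would push the sum forward along the map $x\mapsto(x^{q+1},x^{q^3+1},\dots,x^{q^{m-1}+1})$ from $\gf_{q^s}^*$ to $\mathcal{D}$. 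A short calculation shows this map is exactly $(q+1)$-to-$1$: the fibre through $x$ is the coset $x\cdot\mu_{q+1}$, where $\mu_{q+1}\subset\gf_{q^2}^*$ is the group of $(q+1)$-th roots of unity, because $c^{q+1}=1$ forces $c\in\gf_{q^2}$ and hence $c^{q^{2j-1}+1}=c^{q+1}=1$ for every $j$. This is consistent with the count $|\mathcal{D}|=(q^{2m}-1)/(q+1)$ already used in Lemma \ref{Lemma_graph_rank}.

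Next, I would introduce, for each $\Lambda\in\gf_{q^s}^t$, the additive character of the abelian group $\mathcal{G}=(\gf_{q^s}^t,+)$ defined by
\begin{eqnarray*}
\chi_\Lambda(d_1,\dots,d_t)=\omega_p^{\tr_{q^s/p}(\sum_{j=1}^{t}\lambda_j d_j)}.
\end{eqnarray*}
Non-degeneracy of the trace pairing on $\gf_{q^s}^t$ guarantees that $\Lambda\mapsto\chi_\Lambda$ is a group isomorphism onto $\hat{\mathcal{G}}$; in particular, as $\Lambda$ runs through $\gf_{q^s}^t$ the characters $\chi_\Lambda$ hit every element of $\hat{\mathcal{G}}$ exactly once. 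Combining the $(q+1)$-to-$1$ fibration with this character identification yields
\begin{eqnarray*}
T_{Q_\Lambda}-1=(q+1)\,\chi_\Lambda(\mathcal{D}),
\end{eqnarray*}
so the problem is reduced to determining the multiset $\{\chi_\Lambda(\mathcal{D}):\Lambda\in\gf_{q^s}^t\}$.

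Now I would invoke Lemma \ref{Lemma_graph_Li}: this multiset is precisely the spectrum of $\mathrm{Cay}(\mathcal{G},\mathcal{D})$. By the isomorphism in Lemma \ref{Lemma_graph_rank}, this spectrum equals the spectrum of the Hermitian forms graph $\mathrm{Cay}(\mathcal{H},\mathcal{K})$, which is tabulated in Lemma \ref{Lemma_graph_spectrum}: the eigenvalues are $\xi_j=((-q)^{2m-j}-1)/(q+1)$ with multiplicities $f_j$, for $0\le j\le m$. Substituting $\chi_\Lambda(\mathcal{D})=\xi_j$ into $T_{Q_\Lambda}=1+(q+1)\chi_\Lambda(\mathcal{D})$ gives $T_{Q_\Lambda}=(-q)^{2m-j}=(-1)^{j}q^{2m-j}$ (using that $2m$ is even), with the same multiplicity $f_j$, which is the claimed distribution.

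The only genuine subtlety, which I regard as the main obstacle, is to confirm cleanly the $(q+1)$-to-$1$ property of the fibration $x\mapsto(x^{q+1},\dots,x^{q^{m-1}+1})$; every other step is a straightforward application of the previously established lemmas. A sanity check at $\Lambda=0$ (forcing $Q_\Lambda\equiv 0$, hence $T_{Q_\Lambda}=q^{2m}$, matching $j=0$ with $f_0=1$) confirms the sign and exponent bookkeeping.
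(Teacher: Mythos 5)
Your proposal is correct and follows essentially the same route as the paper's proof: identify $\frac{1}{q+1}(T_{Q_\Lambda}-1)$ with the character sums $\chi_\Lambda(\mathcal{D})$, invoke Lemma \ref{Lemma_graph_Li} to read these off as the spectrum of $\mathrm{Cay}(\mathcal{G},\mathcal{D})$, and transfer to the Hermitian forms graph via Lemmas \ref{Lemma_graph_rank} and \ref{Lemma_graph_spectrum}. The only difference is that you spell out the $(q+1)$-to-$1$ fibration $x\mapsto(x^{q+1},\dots,x^{q^{m-1}+1})$ explicitly, a step the paper uses implicitly in its equation (\ref{eqn_T_eigenvalues}).
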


\begin{proof}
Let $\Gamma_1=\textrm{Cay}(\mathcal{G},\mathcal{D})$
and $\Gamma_2=(\mathcal{H},\mathcal{K})$ be the two isomorphic Cayley graphs
mentioned in Lemma \ref{Lemma_graph_rank}.
We now calculate the spectrum of the graph $\Gamma_1$. It
is easy to verify that the character group of $G$ is
\begin{eqnarray*}
\hat{G} =\{\chi_{\Lambda}: ~\Lambda=(\lambda_1,\lambda_2,\cdots,\lambda_t) \in \gf^t_{q^s}\},
\end{eqnarray*}
where
\begin{eqnarray*}
\chi_{\Lambda}(g)=\omega_p^{\sum_{i=1}^{t}\tr_{q^s/p}(\lambda_ig_i)}
\end{eqnarray*}
for any $g=(g_1,g_2,\cdots,g_t)\in \mathcal{G}$. By Lemma \ref{Lemma_graph_Li}, the eigenvalues of $\Gamma_1$
are given by
\begin{eqnarray}\label{eqn_T_eigenvalues}
\chi_{\Lambda}(\mathcal{D})&&=\sum_{g\in \mathcal{D}}\omega_p^{\sum_{i=1}^{t}\tr_{q^s/p}(\lambda_ig_i)}\nonumber\\
&&={1\over q+1}\sum_{x\in \gf^*_{q^s}}\omega_p^{\sum_{i=1}^{t}\tr_{q^s/p}(\lambda_i x^{q^{2i-1}+1})}\nonumber\\
&&={1\over q+1}\sum_{x\in \gf^*_{q^s}}\omega_p^{\tr_{q/p}(\sum_{i=1}^{t}\tr_{q^s/q}(\lambda_i x^{q^{2i-1}+1}))}\nonumber\\
&&={1\over q+1}(T_{Q_{\Lambda}}-1).
\end{eqnarray}
On the other hand, by Lemma \ref{Lemma_graph_rank}, $\Gamma_1$   has the same
spectrum as  $\Gamma_2$.  The value distribution of $T_{Q_{\Lambda}}$ then follows from Equation (\ref{eqn_T_eigenvalues}) and Lemma \ref{Lemma_graph_spectrum}.
\end{proof}

\begin{lemma}\label{Lemma_Rank_dis_Q}
Let  $Q_{\Lambda}(x)$  be the quadratic form defined by (\ref{eqn_Q_lambda}). Then the rank of $Q_{\Lambda}(x)$
is equal to $2j$ for some  $0\leq j\leq m$. Furthermore,
the number of $\Lambda\in \gf_{q^s}^t$ such that the rank of $Q_{\Lambda}(x)$ is $2j$
is equal to $f_j$, where $f_j$ is given by $(\ref{eqn_f})$.
\end{lemma}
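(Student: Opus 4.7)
The plan is to read off the rank of $Q_\Lambda$ directly from the exponential sum $T_{Q_\Lambda}$ defined in (\ref{eqn_TQ_lambda}), whose value distribution is already known from Lemma \ref{Lemma_T}. The bridge between the two is Lemma \ref{Lemma_Quadratic_modue}, which asserts that $|T_{Q_\Lambda}|$ equals either $q^{s - r/2}$ or $0$, where $r = \rank(Q_\Lambda)$. No fresh exponential-sum computation will be required; the lemma is essentially a translation step between the two preceding lemmas.

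First I would observe that, since $s = 2m$ and Lemma \ref{Lemma_T} gives $T_{Q_\Lambda} = (-1)^j q^{2m - j}$ for some $0 \leq j \leq m$, the sum $T_{Q_\Lambda}$ is always nonzero and its absolute value is always an integer power of $q$. Comparing $|T_{Q_\Lambda}| = q^{2m - r/2}$ from Lemma \ref{Lemma_Quadratic_modue} with $|T_{Q_\Lambda}| = q^{2m - j}$ from Lemma \ref{Lemma_T} then forces $2m - r/2 = 2m - j$, hence $r = 2j$. In particular $r$ is even and lies in $\{0, 2, 4, \ldots, 2m\}$, which is the first claim. The fact that $r/2 = j$ must be an integer also confirms, a posteriori, that $Q_\Lambda$ can never have odd rank, so Type II of Lemmas \ref{Lemma_Klapper1} and \ref{Lemma_Klapper2} does not occur here.

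For the frequency statement, note that the absolute values $q^{2m - j}$ are pairwise distinct as $j$ ranges over $\{0, 1, \ldots, m\}$. Hence the map from $\Lambda$ to its rank $2j$ factors cleanly through the map from $\Lambda$ to $|T_{Q_\Lambda}|$: the set of $\Lambda \in \gf_{q^s}^t$ with $\rank(Q_\Lambda) = 2j$ coincides with the set of $\Lambda$ satisfying $|T_{Q_\Lambda}| = q^{2m - j}$, which by Lemma \ref{Lemma_T} has cardinality $f_j$. There is no genuine obstacle in this argument: all the heavy lifting has been absorbed into Lemma \ref{Lemma_T}, whose proof rests on the Cayley-graph isomorphism of Lemma \ref{Lemma_graph_rank} and the known spectrum of Hermitian forms graphs given by Lemma \ref{Lemma_graph_spectrum}.
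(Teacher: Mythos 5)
Your proof is correct and follows exactly the paper's route: the paper's own proof of this lemma is a one-line deduction from Lemma \ref{Lemma_Quadratic_modue} and Lemma \ref{Lemma_T}, and your write-up simply makes the comparison $q^{2m-r/2}=q^{2m-j}$ (valid since $T_{Q_\Lambda}\neq 0$) and the resulting identification of rank classes with absolute-value classes explicit.
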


\begin{proof}
The conclusion follows directly from Lemmas \ref{Lemma_Quadratic_modue} and \ref{Lemma_T}.
\end{proof}

\vspace{2mm}

\begin{theorem}\label{Thereom_even1}
When $m$ is even, $\D_{(q,m)}$ is a $[q^{2m}-1,m^2+2m,q^{2m-2}(q^2-q-1)]$ cyclic code over $\gf_q$,
and has the weight distribution listed in Table \ref{Table_even1}.
\end{theorem}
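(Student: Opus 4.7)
The plan is to reduce the weight distribution to the joint value distribution of $S_{Q_\Lambda}(\beta)$ using the Delsarte description in (\ref{eqn_D_qm}), and then invoke Lemmas \ref{Lemma_Rank_dis_Q}, \ref{Lemma_T}, and \ref{Lemma_distr_main1} to carry out the count. Concretely, by (\ref{eqn_Hamming_weight}) each codeword of $\D_{(q,m)}$ has the form $\bc_{(\beta,\Lambda)}$ with weight
$$\wt(\bc_{(\beta,\Lambda)}) \;=\; q^s - q^{s-1} - \tfrac{1}{q}\,S_{Q_\Lambda}(\beta),$$
and since the parameter space $\gf_{q^s}\times\gf_{q^s}^t$ has cardinality $q^{s(t+1)}=q^{m^2+2m}=|\D_{(q,m)}|$, the map $(\beta,\Lambda)\mapsto\bc_{(\beta,\Lambda)}$ is a bijection. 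So the weight distribution is determined once I know how often each value of $S_{Q_\Lambda}(\beta)$ is attained.

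Next, I would slice the count by the rank of $Q_\Lambda$. Lemma \ref{Lemma_Rank_dis_Q} asserts that the rank is always an even integer $r=2j$ with $0\le j\le m$, and that exactly $f_j$ of the $\Lambda\in\gf_{q^s}^t$ give rank $2j$. To apply Lemma \ref{Lemma_distr_main1} I also need the type sign $\epsilon$. This is where Lemma \ref{Lemma_T} is essential: Lemma \ref{Lemma_Quadratic_modue} gives $T_{Q_\Lambda}=\epsilon\,q^{s-j}$ for a rank-$2j$ form, while Lemma \ref{Lemma_T} pins down $T_{Q_\Lambda}=(-1)^j q^{2m-j}$ for every one of the $f_j$ forms of rank $2j$. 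Since $s-j=2m-j$, comparing signs forces $\epsilon=(-1)^j$ \emph{uniformly} across all $f_j$ forms of rank $2j$. In other words, all rank-$2j$ forms are of Type I if $j$ is even and of Type III if $j$ is odd. This uniformity, rather than the rank count itself, is the main obstacle; without it one would only know the distribution of $\epsilon$ on average.

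With $(r,\epsilon)=(2j,(-1)^j)$ in hand, I would apply Lemma \ref{Lemma_distr_main1} to each of the $f_j$ forms of rank $2j$. This produces, for $\beta$ running over $\gf_{q^s}$, the three values $0$, $(-1)^j(q-1)q^{s-j}$, and $(-1)^{j+1}q^{s-j}$ of $S_{Q_\Lambda}(\beta)$ with the frequencies given there. Dividing by $-q$ and adding $q^s-q^{s-1}$ converts them into Hamming weights; multiplying each frequency by $f_j$ and summing over $j=0,1,\dots,m$ assembles the full weight enumerator. The $j=0$ term must be handled separately: it contributes the unique zero codeword (from $\Lambda=0,\beta=0$) and $q^s-1$ codewords of weight $(q-1)q^{s-1}$ (from $\Lambda=0,\beta\ne 0$), and these should be merged with contributions from $j\ge 1$ at the same weight.

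Finally, to read off the minimum distance, I would inspect the candidate weights. The two weight classes arising from $j\ge 1$ are $(q-1)q^{s-1}-\epsilon(q-1)q^{s-j-1}$ and $(q-1)q^{s-1}+\epsilon q^{s-j-1}$, both with $\epsilon=(-1)^j$. The smallest among these, in absolute decrease from $(q-1)q^{s-1}$, is obtained in the second class at $j=1$ (where $\epsilon=-1$), yielding $q^s-q^{s-1}-q^{s-2}=q^{2m-2}(q^2-q-1)$, which matches the claimed minimum distance. The length $q^{2m}-1$ and dimension $m^2+2m$ have already been established in Section \ref{sec-introd}, so these three ingredients together give the statement, with the table populated by the $3m+1$ (possibly coalescing) weight values described above.
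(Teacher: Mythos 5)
Your proposal is correct and follows essentially the same route as the paper: the weight formula (\ref{eqn_Hamming_weight}) combined with Lemmas \ref{Lemma_Rank_dis_Q} and \ref{Lemma_distr_main1}, summed over the rank classes $r=2j$ with multiplicities $f_j$. Your explicit deduction that $\epsilon=(-1)^j$ uniformly on each rank class, via Lemma \ref{Lemma_T} and Lemma \ref{Lemma_Quadratic_modue}, is a detail the paper leaves implicit but is exactly the intended argument.
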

\begin{proof}
The weight distribution of $\D_{(q,m)}$ follows directly from Equation (\ref{eqn_Hamming_weight}), and Lemmas \ref{Lemma_Rank_dis_Q} and \ref{Lemma_distr_main1}.
\end{proof}

\vspace{2mm}

\begin{table*}[!t]
\renewcommand{\arraystretch}{1.5}
\centering
\begin{threeparttable}
\caption{Weight Distribution of $\D_{(q,m)}$}\label{Table_even1}
\begin{tabular}{|l|l|}
\hline
 Hamming Weight& Frequency\\
\hline
\hline
  $0$ & $1$ \\
  \hline
 $q^{2m}-q^{2m-1}$ & $q^{2m}-1+\sum_{j=1}^{m}(q^{2m}-q^{2j})f_j$\\
\hline
 $q^{2m}-q^{2m-1}-(-1)^jq^{2m-j-1}(q-1)$& $(q^{2j-1}+(-1)^jq^{j-1}(q-1))f_j$ \\
\hline
 $q^{2m}-q^{2m-1}+(-1)^jq^{2m-j-1}$& $(q^{2j-1}-(-1)^jq^{j-1})(q-1)f_j$ \\
\hline
\end{tabular}
\begin{tablenotes}
 $f_j$ is given by (\ref{eqn_f}), $1\leq j\leq m$
\end{tablenotes}
\end{threeparttable}
\end{table*}

\begin{example}
Let $q=2$ and $m=4$. Then the code $\D_{(2,4)}$ is a $[255,24,64]$  code over $\gf_2$ with the weight enumerator
\begin{eqnarray*}
  &&1+255 x^{64}+35700 x^{96}+856800 x^{112}+5178880 x^{120}+5448075 x^{128}+4569600 x^{136}+\\
  &&666400 x^{144}+21420 x^{160}+85 x^{192}
\end{eqnarray*}
which confirms the weight distribution in Table \ref{Table_even1}.
\end{example}

\begin{example}
Let $q=3$ and $m=2$. Then the code $\D_{(3,2)}$ is a $[80,8,45]$  code over $\gf_3$ with the weight enumerator
\begin{eqnarray*}
1+160 x^{45}+1980 x^{48}+1520 x^{54}+2880 x^{57}+20 x^{72}
\end{eqnarray*}
which confirms the weight distribution in Table \ref{Table_even1}.
\end{example}

\begin{example}
Let $q=2^2$ and $m=2$. Then the code $\D_{(4,2)}$ is a $[255,8,176]$  code over $\gf_{2^2}$ with the weight
enumerator
\begin{eqnarray*}
1+765 x^{176}+15504 x^{180}+12495 x^{192}+36720 x^{196}+51 x^{240}
\end{eqnarray*}
which confirms the weight distribution in Table \ref{Table_even1}.
\end{example}

\subsection{The weight distribution of $\C_{(q,m)}$ for even $m$}

\begin{table*}[!t]
\renewcommand{\arraystretch}{1.5}
\centering
\begin{threeparttable}
\caption{Weight Distribution of $\C_{(q,m)}$}\label{Table_even2}
\begin{tabular}{|l|l|}
\hline
 Hamming Weight& Frequency\\
\hline
\hline
  $0$ & $1$ \\
  \hline
 $q^{2m}-q^{2m-1}-(-1)^jq^{2m-j-1}(q-1)$ & $f_j$\\
\hline
\end{tabular}
\begin{tablenotes}
 $f_j$ is given by (\ref{eqn_f}), $1\leq j\leq m$
\end{tablenotes}
\end{threeparttable}
\end{table*}

\begin{theorem}\label{Thereom_even2}
When $m$ is even, $\C_{(q,m)}$ is a $[q^{2m}-1,m^2,(q^{2m}-q^{2m-1})(1-q^{-2})]$ cyclic code over $\gf_q$,
and has the weight distribution listed in Table \ref{Table_even2}.
\end{theorem}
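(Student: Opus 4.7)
The plan is to imitate the proof of Theorem \ref{Thereom_even1}, specialized to the simpler situation where the trace-linear parameter $\beta$ is absent. First, observe that the parity-check polynomial of $\C_{(q,m)}$ is $h'(x) = h(x)/h_1(x) = \prod_{i=1}^{t}h_{q^{2i-1}+1}(x)$. By Delsarte's theorem, every codeword of $\C_{(q,m)}$ can then be written as
\begin{eqnarray*}
\bc_{\Lambda} = \left(Q_{\Lambda}(\pi^i)\right)_{i=0}^{q^s-2}, \qquad \Lambda = (\lambda_1,\ldots,\lambda_t) \in \gf_{q^s}^t,
\end{eqnarray*}
where $Q_{\Lambda}$ is the quadratic form from (\ref{eqn_Q_lambda}). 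Since $\dim \C_{(q,m)} = m^2 = st$, the parameterization $\Lambda \mapsto \bc_{\Lambda}$ is bijective, so it suffices to determine the weights of the $\bc_{\Lambda}$.

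Next, repeating the character-sum manipulation of (\ref{eqn_Hamming_weight}) with the linear term $\tr_{q^s/q}(\beta x)$ set to zero yields
\begin{eqnarray*}
\wt(\bc_{\Lambda}) = q^s - q^{s-1} - \frac{1}{q}S_{Q_{\Lambda}}(0),
\end{eqnarray*}
with $S_{Q_{\Lambda}}(\beta)$ defined as in (\ref{def_S_Q}). So the task reduces to evaluating $S_{Q_{\Lambda}}(0)$ and recording how often each value arises as $\Lambda$ ranges over $\gf_{q^s}^t$.

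The quadratic-form machinery already developed in Section \ref{sec-prelim} then closes the argument. By Lemma \ref{Lemma_Rank_dis_Q}, $Q_{\Lambda}$ has even rank $2j$ for some $0 \leq j \leq m$, and exactly $f_j$ choices of $\Lambda$ produce rank $2j$. For $j = 0$, $Q_{\Lambda}$ vanishes identically and $\bc_{\Lambda}$ is the zero codeword (weight $0$, frequency $f_0 = 1$). For $j \geq 1$, Lemma \ref{Lemma_T} forces $T_{Q_{\Lambda}} = (-1)^j q^{s-j}$, and then Lemma \ref{Lemma_Quadratic_modue} pins down $\epsilon = (-1)^j$ uniformly across the rank-$2j$ stratum, giving
\begin{eqnarray*}
S_{Q_{\Lambda}}(0) = (-1)^j(q-1)q^{s-j}.
\end{eqnarray*}
Substituting into the weight formula above yields $\wt(\bc_{\Lambda}) = q^{2m} - q^{2m-1} - (-1)^j(q-1)q^{2m-j-1}$ with frequency $f_j$ for each $1 \leq j \leq m$, matching Table \ref{Table_even2}. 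The minimum distance is then obtained by minimizing over $j \geq 1$: odd $j$ give weights strictly above $q^{s}-q^{s-1}$, so the minimum sits at the smallest even $j \geq 2$, namely $j = 2$, producing the stated value $q^{2m-3}(q-1)(q^2-1) = (q^{2m}-q^{2m-1})(1-q^{-2})$. There is no real obstacle here, since the hard step — the rank distribution of the family $\{Q_{\Lambda}\}$ obtained via the Hermitian-forms-graph spectrum — has already been carried out in Lemmas \ref{Lemma_T} and \ref{Lemma_Rank_dis_Q}.
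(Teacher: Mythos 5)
Your proposal is correct and takes essentially the same route as the paper: parameterize the codewords via Delsarte with the $\beta$-term set to zero, reduce the weight to $q^{s}-q^{s-1}-\frac{q-1}{q}T_{Q_{\Lambda}}$ using Lemma \ref{Lemma_Quadratic_modue} (noting $T_{Q_{\Lambda}}\neq 0$), and then read off the distribution from Lemmas \ref{Lemma_T} and \ref{Lemma_Rank_dis_Q}. Your explicit bijectivity remark and the minimum-distance computation at $j=2$ are details the paper leaves implicit, but the argument is the same.
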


\begin{proof}
According to the definition of $\C_{(q,m)}$ and Delsarte' s Theorem \cite{Delsarte}, we have
\begin{eqnarray*}
\C_{(q,m)}=\{{\bf{c}}_{(\beta,\lambda_1,\lambda_2,\cdots,\lambda_t)}: ~{\bf{c}}_{(\beta,\lambda_1,\lambda_2,\cdots,\lambda_t)}\in \D_{(q,m)}, ~\beta=0\},
\end{eqnarray*}
where $\D_{(q,m)}$ is given by (\ref{eqn_D_qm}).
Then the weight
of a codeword ${\bf{c}}_{(0,\lambda_1,\lambda_2,\cdots,\lambda_t)}$ in $\C_{(q,m)}$ is
\begin{eqnarray}\label{eqn_Hamming_weight2}
&&\textrm{WT}({\bf{c}_{(0,\lambda_1,\lambda_2,\cdots,\lambda_t)}})\nonumber\\
&&~~~~~=q^s-q^{s-1}-{1\over q}\sum_{a\in \gf_q^*}\sum_{x\in \gf_{q^s}}\omega_p^{\tr_{q/p}(aQ_{\Lambda}(x))}\nonumber \\
&&~~~~~=q^s-q^{s-1}-{q-1\over q}T_{Q_{\Lambda}},
\end{eqnarray}
where $T_{Q_{\Lambda}}$ is given by (\ref{eqn_TQ_lambda}) and the second identity followed from Lemma \ref{Lemma_Quadratic_modue} since $T_{Q_{\Lambda}}\neq 0$ due to Lemma \ref{Lemma_T} .
The desired conclusion
then follows from Equation (\ref{eqn_Hamming_weight2}), Lemmas \ref{Lemma_T} and \ref{Lemma_Rank_dis_Q}.
\end{proof}

\begin{example}
Let $q=2$ and $m=4$. Then the code $\C_{(2,4)}$ is a $[255,16,96]$  code over $\gf_2$ with the weight enumerator
\begin{eqnarray*}
1+ 3570^{96}+38080 x^{120}+23800 x^{144}+85 x^{192}
\end{eqnarray*}
which confirms the weight distribution in Table \ref{Table_even2}.
\end{example}

\begin{example}
Let $q=3$ and $m=2$. Then the code $\C_{(3,2)}$ is a $[80,4,48]$  code over $\gf_3$ with the weight enumerator
\begin{eqnarray*}
1+60 x^{48}+20 x^{72}
\end{eqnarray*}
which confirms the weight distribution in Table \ref{Table_even2}.
\end{example}

\begin{example}
Let $q=2^2$ and $m=2$. Then the code $\C_{(4,2)}$ is a $[255,4,180]$  code over $\gf_{2^2}$ with the weight
enumerator
\begin{eqnarray*}
1+204 x^{180}+51 x^{240}
\end{eqnarray*}
which confirms the weight distribution in Table \ref{Table_even2}.
\end{example}

\subsection{The weight distribution of $\E_{(q,m)}$ for even $m$}

\begin{theorem}\label{Thereom_even3}
When $m$ is even, $\E_{(q,m)}$ is a $[q^{2m}-1,m^2+2m+1,q^{2m-2}(q^2-q-1)-1]$ cyclic code over $\gf_q$,
and has the weight distribution listed in Table III.
\end{theorem}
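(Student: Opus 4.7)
The plan parallels the proof of Theorem~\ref{Thereom_even1}, exploiting the fact that $\E_{(q,m)}$ is obtained from $\D_{(q,m)}$ by appending the extra dual zero $\pi^0=1$. By Delsarte's Theorem, every codeword of $\E_{(q,m)}$ has the form
\begin{eqnarray*}
\bc_{(b,\beta,\Lambda)} = \left(b + \tr_{q^s/q}(\beta\pi^i) + Q_{\Lambda}(\pi^i)\right)_{i=0}^{q^s-2},
\end{eqnarray*}
where $b\in\gf_q$, $\beta\in\gf_{q^s}$ and $\Lambda\in\gf_{q^s}^t$; in particular the dimension is $m^2+2m+1$, as claimed.

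Next, I would repeat the orthogonality computation that produced~(\ref{eqn_Hamming_weight}), now carrying the extra constant $b$, and separate the $x=0$ contribution according to whether $b=0$ or $b\neq 0$. The result is
\begin{eqnarray*}
\textrm{WT}(\bc_{(b,\beta,\Lambda)}) = \left\{
\begin{array}{ll}
q^s-q^{s-1} - {1\over q}\, S_{Q_{\Lambda}}(\beta), & b=0,\\[4pt]
q^s-q^{s-1} - 1 - {1\over q}\, R_{Q_{\Lambda},b}(\beta), & b\in\gf_q^*.
\end{array}\right.
\end{eqnarray*}
The extra $-1$ in the second line is forced by the identity $\sum_{a\in\gf_q^*}\omega_p^{\tr_{q/p}(ab)}=-1$ for $b\ne 0$ (versus $q-1$ when $b=0$), and it is the only deviation from the $\D_{(q,m)}$ calculation.

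The $b=0$ branch then reproduces exactly the weight distribution of $\D_{(q,m)}$ in Table~\ref{Table_even1}. For the $b\in\gf_q^*$ branch I would invoke Lemma~\ref{Lemma_distr_main2} together with the rank--type pairing already implicit in Theorem~\ref{Thereom_even1}: comparing the \emph{signed} values $T_{Q_{\Lambda}}=(-1)^j q^{2m-j}$ with multiplicity $f_j$ supplied by Lemma~\ref{Lemma_T} with the formula $T_Q=\epsilon q^{s-r/2}$ of Lemma~\ref{Lemma_Quadratic_modue} yields $r=2j$ and $\epsilon=(-1)^j$ for every $Q_{\Lambda}$ of rank $2j$. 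Plugging these into Lemma~\ref{Lemma_distr_main2} gives the three-valued distribution of $R_{Q_{\Lambda},b}(\beta)$ as $\beta$ varies over $\gf_{q^s}$; multiplying these frequencies by $f_j$ (Lemma~\ref{Lemma_Rank_dis_Q}) and by the $q-1$ choices of $b\in\gf_q^*$, and then merging with the $b=0$ rows, produces Table~III.

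The main obstacle is purely combinatorial bookkeeping: tracking the $-1$ shift in the $b\neq 0$ weight formula, avoiding double-counting the $b=0$ branch, and verifying consistency at the degenerate $j=0$ case (where $Q_{\Lambda}\equiv 0$ and Lemma~\ref{Lemma_distr_main2} must be read with $r=0$). The stated minimum distance $q^{2m-2}(q^2-q-1)-1$ should then fall out of the $(b\ne 0,\,j=1)$ row with $R_{Q_{\Lambda},b}(\beta)=q^{2m-1}$, which is precisely one less than the minimum weight $q^{2m-2}(q^2-q-1)$ of $\D_{(q,m)}$.
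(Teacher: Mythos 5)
Your proposal is correct and follows essentially the same route as the paper: split on $b=0$ versus $b\in\gf_q^*$, note that the $b=0$ branch is exactly $\D_{(q,m)}$ (Table \ref{Table_even1}), derive the shifted weight formula $q^s-q^{s-1}-1-\tfrac1q R_{Q_\Lambda,b}(\beta)$ for $b\neq 0$, and combine Lemma \ref{Lemma_distr_main2} with the rank distribution of Lemma \ref{Lemma_Rank_dis_Q} (times the $q-1$ choices of $b$) to assemble Table \ref{Table_even3}. Your extra remarks making $\epsilon=(-1)^j$ explicit via Lemma \ref{Lemma_T} and checking the $j=0$ and minimum-distance rows are details the paper leaves implicit, but they do not change the argument.
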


\begin{proof}
By definition, the code $\E_{(q,m)}$ is given by
\begin{eqnarray*}
\E_{(q,m)}=\{ {\bf c}_{(\beta,\lambda_1,\lambda_2,\cdots,\lambda_t)}+b: ~{\bf c}_{(\beta,\lambda_1,\lambda_2,\cdots,\lambda_t)}\in \D_{(q,m)}, b\in \gf_{q}\},
\end{eqnarray*}
where $\D_{(q,m)}$ is given by (\ref{eqn_D_qm}) and
\begin{eqnarray*}
{\bf c}_{(\beta,\lambda_1,\lambda_2,\cdots,\lambda_t)}+b=(c_0+b,c_1+b,\cdots,c_{q^{2m}-2}+b).
\end{eqnarray*}
Herein $c_i$ denotes the $i$th coordinate of the codeword ${\bf c}_{(\beta,\lambda_1,\lambda_2,\cdots,\lambda_t)}$.
In the following, we distinguish between the cases $b=0$ and $b\neq 0$ to calculate the weight distribution of
the codewords in $\E_{(q,m)}$.

{\textit{Case A}}, where $b=0$: The codewords in this case form exactly the code $\D_{(q,m)}$. Therefore the weight distribution
of the codewords in this case is given by Table I.

{\textit{Case B}}, where $b\neq 0$: In this case, the weight of the codeword ${\bf c}_{(\beta,\lambda_1,\lambda_2,\cdots,\lambda_t)}+b$
is given by
\begin{eqnarray}\label{eqn_Hamming_weight3}
&&\textrm{WT}({\bf{c}_{(\beta,\lambda_1,\lambda_2,\cdots,\lambda_t)}}+b)\nonumber\\
&&~~~~~~=q^s-1-{1\over q}\sum_{a\in \gf_q}\sum_{x\in \gf^*_{q^s}}\omega_p^{\tr_{q/p}(a(Q_{\Lambda}(x)+\tr_{q^s/q}(\beta x)+b))}\nonumber\\
&&~~~~~~=q^s-1-{1\over q}\left(\sum_{a\in \gf^*_q}\sum_{x\in \gf_{q^s}}\omega_p^{\tr_{q/p}(a(Q_{\Lambda}(x)+\tr_{q^s/q}(\beta x)+b))}+q^{s}\right)\nonumber\\
&&~~~~~~=q^s-q^{s-1}-1-{1\over q}R_{Q_{\Lambda}, b}(\beta),
\end{eqnarray}
where
\begin{eqnarray*}
R_{Q_{\Lambda}, b}(\beta)=\sum_{a\in \gf^*_q}\sum_{x\in \gf_{q^s}}\omega_p^{\tr_{q/p}(a(Q_{\Lambda}(x)+\tr_{q^s/q}(\beta x)+b))}.
\end{eqnarray*}
The weight distribution of  the codewords for $b\neq 0$ then follows from Equation (\ref{eqn_Hamming_weight3}), Lemmas \ref{Lemma_distr_main2} and \ref{Lemma_Rank_dis_Q}

Summarizing the results in the two cases  above leads to the weight distribution of $\E_{(q,m)}$ in Table \ref{Table_even3}.
\end{proof}

\begin{table*}[!t]
\renewcommand{\arraystretch}{1.5}
\centering
\begin{threeparttable}
\caption{Weight Distribution of $\E_{(q,m)}$}\label{Table_even3}
\begin{tabular}{|l|l|}
\hline
 Hamming Weight& Frequency\\
\hline
\hline
  $0$ & $1$ \\
  \hline
 $q^{2m}-1$ & $q-1$\\
   \hline
 $q^{2m}-q^{2m-1}$ & $q^{2m}-1+\sum_{j=1}^{m}(q^{2m}-q^{2j})f_j$\\
   \hline
 $q^{2m}-q^{2m-1}-1$ & $q^{2m}-1+\sum_{j=1}^{m}(q^{2m}-q^{2j})(q-1)f_j$\\
\hline
 $q^{2m}-q^{2m-1}-(-1)^jq^{2m-j-1}(q-1)$& $(q^{2j-1}+(-1)^jq^{j-1}(q-1))f_j$ \\
 \hline
 $q^{2m}-q^{2m-1}-1-(-1)^jq^{2m-j-1}(q-1)$& $(q^{2j-1}-(-1)^jq^{j-1})(q-1)f_j$ \\
\hline
 $q^{2m}-q^{2m-1}+(-1)^jq^{2m-j-1}$& $(q^{2j-1}-(-1)^jq^{j-1})(q-1)f_j$ \\
\hline
 $q^{2m}-q^{2m-1}-1+(-1)^jq^{2m-j-1}$& $(q^{2j}-q^{2j-1}+(-1)^jq^{j-1})(q-1)f_j$ \\
\hline
\end{tabular}
\begin{tablenotes}
 $f_j$ is given by (\ref{eqn_f}), $1\leq j\leq m$
\end{tablenotes}
\end{threeparttable}
\end{table*}

\begin{example}
Let $q=2$ and $m=4$. Then the code $\E_{(2,4)}$ is a $[255,25,63]$  code over $\gf_2$ with the weight enumerator
\begin{eqnarray*}
&&1+ 85^{63}+255 x^{64}+21420 x^{95}+35700 x^{96}+666400^{111}+856800 x^{112}+4569600 x^{119}+\\
&&5178880 x^{120}+5448075^{127}+5448075 x^{128}+5178880 x^{135}+4569600 x^{136}+856800^{143}+\\
&&666400 x^{144}+35700 x^{159}+21420 x^{160}+255^{191}+85 x^{192}+ x^{255}
\end{eqnarray*}
which confirms the weight distribution in Table \ref{Table_even3}.
\end{example}

\begin{example}
Let $q=3$ and $m=2$. Then the code $\E_{(3,2)}$ is a $[80,9,44]$  code over $\gf_3$ with the weight enumerator
\begin{eqnarray*}
&&1+200 x^{44}+160 x^{45}+2880 x^{47}+1980 x^{48}+3040 x^{53}+1520 x^{54}+6840 x^{56}+2880 x^{57}+\\
&& 160 x^{71}+20 x^{72}+2 x^{80}
\end{eqnarray*}
which confirms the weight distribution in Table \ref{Table_even3}.
\end{example}

\begin{example}
Let $q=2^2$ and $m=2$. Then the code $\E_{(4,2)}$ is a $[255,9,175]$  code over $\gf_{2^2}$ with the weight
enumerator
\begin{eqnarray*}
&& 1+1683 x^{175}+765 x^{176}+36720 x^{179}+15504 x^{180}+37485 x^{191}+12495 x^{192}+119952 x^{195}+\\
&& 36720 x^{196}+765 x^{239}+51 x^{240}+255 x^3
\end{eqnarray*}
which confirms the weight distribution in Table \ref{Table_even3}.
\end{example}

\section{The weight distributions of $\C_{(q,m)}$, $\D_{(q,m)}$, and $\E_{(q,m)}$ for odd $m$}\label{sec-wtodd}

Let $s=2m=2(2t+1)$. Recall that the parity-check  polynomial of the code $\D_{(q,m)}$ is given by (\ref{eqn_parity_check_h}).
Using Delsarte's Theorem \cite{Delsarte}, the code   $\D_{(q,m)}$ for odd $m$ can be expressed as
\begin{eqnarray*}
\D_{(q,m)}=\{{\bf{c}}_{(\beta,\delta_0,\delta_1,\cdots,\delta_t)}: ~ \delta_0\in \gf_{q^m}, \beta, \delta_1,\cdots,\delta_{t}\in \gf_{q^s}\}
\end{eqnarray*}
in which the codeword
\begin{eqnarray}\label{eqn_def_c_odd}
{\bf{c}}_{(\beta,\delta_0,\delta_1,\cdots,\delta_t)}=\left(\tr_{q^s/q}(\beta \pi^i)+\tr_1^m(\delta_0\pi^{(q^{m}+1)i})+\sum_{j=1}^{t}\tr_{q^s/q}(\delta_j \pi^{(q^{2j-1}+1)i})\right)_{i=0}^{q^s-2}.
\end{eqnarray}
According to their connections with  $\D_{(q,m)}$, the codes $\C_{(q,m)}$ and $\E_{(q,m)}$ are respectively given by:
\begin{eqnarray*}
\C_{(q,m)}=\{{\bf{c}}_{(\beta,\delta_0,\delta_1,\cdots,\delta_t)}: ~{\bf{c}}_{(\beta,\delta_0,\delta_1,\cdots,\delta_t)}\in \D_{(q,m)}, \beta=0\}
\end{eqnarray*}
and
\begin{eqnarray*}
\E_{(q,m)}=\{{\bf{c}}_{(\beta,\delta_0,\delta_1,\cdots,\delta_t)}+b: ~{\bf{c}}_{(\beta,\delta_0,\delta_1,\cdots,\delta_t)}\in \D_{(q,m)}, b\in \gf_q\}.
\end{eqnarray*}

The weight distribution of $\C_{(q,m)}$, $\D_{(q,m)}$, and $\E_{(q,m)}$ for odd $m$ can be determined
in a similar way as what we have shown in Section \ref{sec-wteven} for even $m$.
The only distinction is that the quadratic forms involved have distinct expressions.  The quadratic
form $Q_{\Lambda}(x)$ for $m$ being even is given by (\ref{eqn_Q_lambda}) where $\Lambda\in \gf^t_{q^s}$. Meanwhile the quadratic
form for $m$ being odd is
\begin{eqnarray}\label{eqn_quadratic_P}
P_{\Delta}(x)=\tr_{q^m/q}(\delta_0 x^{q^{m}+1})+\sum_{i=1}^t\tr_{q^s/q}(\delta_i x^{q^{2i-1}+1}), ~
x\in \gf_{q^s},
\end{eqnarray}
where $\Delta=(\delta_0,\delta_1,\cdots,\delta_{t})\in \gf_{q^m}\times \gf_{q^s}^t$. The following lemmas
show that the quadratic form $P_{\Delta}(x)$, as $\Delta$ runs through $\gf_{q^m}\times \gf_{q^s}^t$,
has rank $2j$ for each $0\leq j\leq m$, and the rank distribution  is given by Equation (\ref{eqn_f})
in Lemma \ref{Lemma_graph_spectrum}.

\begin{lemma}\label{Lemma_graph_rank_odd}
Let $\mathcal{G}=(\gf_{q^m}\times \gf_{q^{s}}^t,+)$ and $D$ be a subset of $G$ given by
\begin{eqnarray*}
\mathcal{D}=\left\{\left(x^{q^m+1},x^{q+1},x^{q^3+1},\cdots,x^{q^{m-2}+1}\right): ~x\in \gf_{q^s}^*\right\}.
\end{eqnarray*}
Let $\mathcal{H}$ be the set of all Hermitian matrices of order $m$ over $\gf_{q^2}$ and $\mathcal{K}$
be a subset formed by the matrices in $\mathcal{H}$ with rank 1. Then the Cayley graph $(\mathcal{G},\mathcal{D})$
is isomorphic to the Cayley graph $(\mathcal{H},\mathcal{K})$.
\end{lemma}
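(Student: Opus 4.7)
The plan is to follow the structure of the proof of Lemma \ref{Lemma_graph_rank}, modified to accommodate the extra $\gf_{q^m}$-component of $\mathcal{G}$ and the extra exponent $q^m$ in the connection set. First I would verify the cardinalities: since $s=2m$ and $m=2t+1$, one has $|\mathcal{G}|=q^m\cdot q^{st}=q^{m(1+2t)}=q^{m^2}$, which equals $|\mathcal{H}|$ by Lemma \ref{Lemma_size_H}. The map $x\mapsto (x^{q^m+1},x^{q+1},\ldots,x^{q^{m-2}+1})$ on $\gf_{q^s}^*$ is $(q+1)$-to-one by an argument parallel to the even case (the fibre through $x$ is $\{cx:c^{q+1}=1\}$), giving $|\mathcal{D}|=(q^{2m}-1)/(q+1)=|\mathcal{K}|$. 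Note also that $(x^{q^m+1})^{q^m}=x^{q^{2m}+q^m}=x^{q^m+1}$, so the first coordinate does lie in $\gf_{q^m}$ as required.

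Next I would fix a basis $\alpha=(\alpha_1,\ldots,\alpha_m)$ of $\gf_{q^s}$ over $\gf_{q^2}$ and define the candidate isomorphism
\[
f(H)=\bigl(\alpha^{(q^m)}H\alpha^T,\;\alpha^{(q)}H\alpha^T,\;\alpha^{(q^3)}H\alpha^T,\;\ldots,\;\alpha^{(q^{m-2})}H\alpha^T\bigr),
\]
which is clearly additive. To see the first coordinate $\alpha^{(q^m)}H\alpha^T$ lands in $\gf_{q^m}$, I would raise it to the $q^m$-power, invoke the identity $H^q=H^T$ (which follows from Hermiticity), and use that $m$ is odd to recover the original expression after transposing a scalar. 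Injectivity is the main step. If $f(H)=0$, then $\alpha^{(q^{2i-1})}H\alpha^T=0$ for $i=1,\ldots,t$ and also $\alpha^{(q^m)}H\alpha^T=0$. Raising the first $t$ equations to the $q^{2m-2i+1}$-th power and applying $H^q=H^T$ produces $\alpha^{(q^{2m-2i+1})}H\alpha^T=0$, which supplies the exponents $q^{2m-1},q^{2m-3},\ldots,q^{m+2}$. Together with the original exponents $q,q^3,\ldots,q^{m-2}$ and the middle one $q^m$ (precisely what the $\gf_{q^m}$-component contributes), this covers every odd exponent $q^{2i-1}$ for $i=1,\ldots,m$. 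A Moore-matrix argument via Corollary 2.38 in \cite{Niddle}, justified by the fact that $\beta_j:=\alpha_j^q$ are still a basis of $\gf_{q^s}$ over $\gf_{q^2}$, then forces $H\alpha^T=0$, and since $\alpha_1,\ldots,\alpha_m$ are $\gf_{q^2}$-linearly independent and the entries of $H$ lie in $\gf_{q^2}$, this yields $H=0$. Because $|\mathcal{H}|=|\mathcal{G}|$, this makes $f$ a group isomorphism.

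Finally I would verify $f(\mathcal{K})=\mathcal{D}$. Each $H\in\mathcal{K}$ has the form $H=\mathbf{v}^T\mathbf{v}^{(q)}$ for some nonzero $\mathbf{v}\in\gf_{q^2}^m$; setting $x=\mathbf{v}^{(q)}\alpha^T\in\gf_{q^s}$, a direct computation using $v_j^{q^{2i}}=v_j$ gives $\alpha^{(q^{2i-1})}H\alpha^T=x^{q^{2i-1}+1}$ for $i=1,\ldots,t$, exactly as in the even case. For the extra coordinate, the analogous computation yields $\alpha^{(q^m)}H\alpha^T=x^{q^m+1}$ precisely because $m$ odd makes $m+1$ even, so $v_j^{q^{m+1}}=v_j$ and $x^{q^m}=\alpha^{(q^m)}\mathbf{v}^T$. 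Thus $f(H)\in\mathcal{D}$, and injectivity combined with $|\mathcal{D}|=|\mathcal{K}|$ forces $f(\mathcal{K})=\mathcal{D}$, completing the proof. The main obstacle is largely bookkeeping: tracking the extra middle exponent $q^m$, confirming it lands in $\gf_{q^m}$, and recognising that it is exactly what fills the gap in the Moore-matrix argument when $m$ is odd.
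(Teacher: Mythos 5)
Your proposal is correct and follows essentially the same route as the paper: the paper defines exactly this map $f(H)=(\alpha^{(q^m)}H\alpha^T,\alpha^{(q)}H\alpha^T,\ldots,\alpha^{(q^{2t-1})}H\alpha^T)$, checks the cardinalities, and then defers to the even-$m$ argument of Lemma \ref{Lemma_graph_rank}. You have simply carried out the deferred details (the first coordinate lying in $\gf_{q^m}$, the middle exponent $q^m$ completing the Moore-matrix system, and the rank-one image computation), all of which are accurate.
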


\begin{proof}
According to the definitions of $\mathcal{G}$ and $\mathcal{D}$, we have
$|\mathcal{G}|=q^{m^2}$ and $|\mathcal{D}|=(q^{2m}-1)/(q+1)$.
Define a map from $\mathcal{H}$ to $\mathcal{G}$ by
\begin{eqnarray*}
f(H)=(\alpha^{(q^m)} H \alpha^T, \alpha^{(q)} H \alpha^T, \alpha^{(q^3)} H \alpha^T, \cdots,  \alpha^{(q^{2t-1})} H \alpha^T).
\end{eqnarray*}
Note that $|\mathcal{G}|=|\mathcal{H}|$ and $|\mathcal{D}|=|\mathcal{K}|$.
It is  sufficient to prove that $f$ is an isomorphism from $\mathcal{H}$ to $\mathcal{G}$ and sending $\mathcal{K}$ to $\mathcal{D}$.
The proof of this lemma is then similar to that of Lemma \ref{Lemma_graph_rank}.
\end{proof}

\vspace{2mm}

\begin{lemma}\label{Lemma_Rank_dis_P}
Let  $P_{\Delta}(x)$  be the quadratic form defined by (\ref{eqn_quadratic_P}). Then the rank of $P_{\Delta}(x)$
is equal to $2j$ for some  $0\leq j\leq m$. Furthermore,
the number of $\Delta\in \gf_{q^m}\times \gf_{q^s}^t$ such that the rank of $P_{\Delta}(x)$ is $2j$
is equal to $f_j$, where $f_j$ is given by $(\ref{eqn_f})$.
\end{lemma}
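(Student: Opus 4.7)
The plan is to follow exactly the template used in the even $m$ case (Lemmas \ref{Lemma_T} and \ref{Lemma_Rank_dis_Q}), replacing the Cayley graph $(\mathcal{G},\mathcal{D})$ of Lemma \ref{Lemma_graph_rank} with the one of Lemma \ref{Lemma_graph_rank_odd}, and the quadratic form $Q_\Lambda$ with $P_\Delta$. Thus the key intermediate goal is to establish an analogue of Lemma \ref{Lemma_T}, namely the value distribution of
\begin{eqnarray*}
T_{P_\Delta}=\sum_{x\in \gf_{q^s}}\omega_p^{\tr_{q/p}(P_{\Delta}(x))},
\end{eqnarray*}
as $\Delta=(\delta_0,\delta_1,\ldots,\delta_t)$ runs through $\gf_{q^m}\times\gf_{q^s}^t$. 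Once this is done, Lemma \ref{Lemma_Quadratic_modue} applied to $P_\Delta$ forces $|T_{P_\Delta}|=q^{2m-r/2}$ or $0$, so matching $|T_{P_\Delta}|=q^{2m-j}$ with $r/2=j$ gives that the rank of $P_\Delta$ must be even, equal to $2j$, and the frequencies transfer verbatim.

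To compute the distribution of $T_{P_\Delta}$, I first identify it with a character sum on the abelian group $\mathcal{G}=\gf_{q^m}\times\gf_{q^s}^t$. Its character group is parameterized by $\Delta\in\gf_{q^m}\times\gf_{q^s}^t$ via
\begin{eqnarray*}
\chi_\Delta(g_0,g_1,\ldots,g_t)=\omega_p^{\tr_{q^m/p}(\delta_0 g_0)+\sum_{i=1}^{t}\tr_{q^s/p}(\delta_i g_i)}.
\end{eqnarray*}
Next, I check that the map $x\mapsto(x^{q^m+1},x^{q+1},\ldots,x^{q^{m-2}+1})$ from $\gf_{q^s}^*$ to $\mathcal{D}$ is exactly $(q+1)$-to-$1$. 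The fibre over a given point consists of $cx$ with $c\in\gf_{q^s}^*$ satisfying $c^{q^m+1}=1$ and $c^{q^{2i-1}+1}=1$ for $1\leq i\leq t$; since $m$ is odd these constraints all collapse to $c^{q+1}=1$ with $c\in\gf_{q^2}^*$, giving exactly $q+1$ preimages. Consequently, transitivity of the trace yields
\begin{eqnarray*}
\chi_\Delta(\mathcal{D})=\frac{1}{q+1}\sum_{x\in\gf_{q^s}^*}\omega_p^{\tr_{q/p}(P_\Delta(x))}=\frac{T_{P_\Delta}-1}{q+1}.
\end{eqnarray*}

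The final step is to feed in the graph isomorphism. By Lemma \ref{Lemma_graph_rank_odd}, the Cayley graph $(\mathcal{G},\mathcal{D})$ is isomorphic to the Hermitian forms graph $(\mathcal{H},\mathcal{K})$, so by Lemma \ref{Lemma_graph_Li} it has the same spectrum, which by Lemma \ref{Lemma_graph_spectrum} consists of the eigenvalues $\xi_j=((-q)^{2m-j}-1)/(q+1)$ with multiplicities $f_j$ for $0\leq j\leq m$. Solving $\chi_\Delta(\mathcal{D})=\xi_j$ for $T_{P_\Delta}$ gives $T_{P_\Delta}=(-1)^j q^{2m-j}$ occurring $f_j$ times. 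Combining this with Lemma \ref{Lemma_Quadratic_modue} as described above yields the claimed rank distribution.

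The main potential obstacle is essentially bookkeeping rather than conceptual: one must verify that the $\gf_{q^m}$-component coming from $\delta_0 x^{q^m+1}$ merges cleanly with the $\gf_{q^s}$-components coming from the $\delta_i x^{q^{2i-1}+1}$ terms when expressing $T_{P_\Delta}$ as a character sum on $\mathcal{G}$, and that the $(q+1)$-to-$1$ covering argument used in the even case carries through thanks to $m$ being odd. Both of these are routine once one writes out the traces carefully, and after that the proof reduces to invoking Lemmas \ref{Lemma_graph_Li}, \ref{Lemma_graph_rank_odd}, \ref{Lemma_graph_spectrum}, and \ref{Lemma_Quadratic_modue} in the same order as the even-$m$ argument.
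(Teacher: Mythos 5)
Your proposal is correct and follows essentially the same route as the paper, which simply states that the proof is analogous to that of Lemma \ref{Lemma_Rank_dis_Q}: you supply exactly the omitted details, namely the odd-$m$ analogue of Lemma \ref{Lemma_T} (via the character group of $\gf_{q^m}\times\gf_{q^s}^t$, the $(q+1)$-to-$1$ covering of $\mathcal{D}$, and the spectrum transferred through Lemma \ref{Lemma_graph_rank_odd}), followed by the application of Lemma \ref{Lemma_Quadratic_modue}. Your verification that the fibre conditions collapse to $c^{q+1}=1$ when $m$ is odd is accurate, so the argument goes through as claimed.
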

\begin{proof}
The proof of this lemma is similar to that of Lemma \ref{Lemma_Rank_dis_Q}.
\end{proof}

\vspace{2mm}

\begin{theorem}\label{Thereom_oddmain}
Let $m$ be odd. Then we have the following.
\begin{itemize}
\item $\D_{(q,m)}$ is a $[q^{2m}-1,m^2+2m,q^{2m-2}(q^2-q-1)]$ cyclic code over $\gf_q$,
and has the weight distribution listed in Table \ref{Table_even1}.

\item $\C_{(q,m)}$ is a $[q^{2m}-1,m^2,(q^{2m}-q^{2m-1})(1-q^{-2})]$ cyclic code over $\gf_q$,
and has the weight distribution listed in Table \ref{Table_even2}.

\item $\E_{(q,m)}$ is a $[q^{2m}-1,m^2+2m+1,q^{2m-2}(q^2-q-1)-1]$ cyclic code over $\gf_q$,
and has the weight distribution listed in Table \ref{Table_even3}.
\end{itemize}
\end{theorem}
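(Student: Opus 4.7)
The plan is to adapt the proofs of Theorems \ref{Thereom_even1}, \ref{Thereom_even2}, and \ref{Thereom_even3} almost verbatim, using the quadratic form $P_{\Delta}$ defined by (\ref{eqn_quadratic_P}) in place of $Q_{\Lambda}$ throughout. This is legitimate because $P_{\Delta}$ is a genuine $\gf_q$-valued quadratic form on $\gf_{q^s}$: the monomial $x^{q^m+1}$ lies in $\gf_{q^m}$ for every $x\in\gf_{q^s}$, so the term $\tr_{q^m/q}(\delta_0 x^{q^m+1})$ is well defined, and the remaining summands $\tr_{q^s/q}(\delta_i x^{q^{2i-1}+1})$ have the same shape as in the even-$m$ case. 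Hence all of Lemmas \ref{Lemma_Quadratic_modue}, \ref{Lemma_distr_main1}, and \ref{Lemma_distr_main2} apply as soon as the rank distribution of $P_{\Delta}$ is known, and that distribution is precisely Lemma \ref{Lemma_Rank_dis_P}, which rests on Lemma \ref{Lemma_graph_rank_odd} together with Lemma \ref{Lemma_graph_spectrum}.

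First, for $\D_{(q,m)}$ I would start from the representation (\ref{eqn_def_c_odd}) and repeat the exponential-sum manipulation of (\ref{eqn_Hamming_weight}) to obtain
\begin{eqnarray*}
\textrm{WT}\bigl(\bc_{(\beta,\delta_0,\delta_1,\ldots,\delta_t)}\bigr) \;=\; q^s - q^{s-1} - \frac{1}{q}\, S_{P_{\Delta}}(\beta),
\end{eqnarray*}
where $S_{P_{\Delta}}$ is defined by (\ref{def_S_Q}) with $Q$ replaced by $P_{\Delta}$. Conditioning on the rank $2j$ of $P_{\Delta}$, which occurs with multiplicity $f_j$ by Lemma \ref{Lemma_Rank_dis_P}, and then applying Lemma \ref{Lemma_distr_main1} with $\epsilon=(-1)^j$ to the inner distribution over $\beta$, reproduces exactly the frequencies listed in Table \ref{Table_even1}.

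Next, for $\C_{(q,m)}$ the codewords are those with $\beta=0$, so their weight equals $q^s - q^{s-1} - \frac{q-1}{q}\, T_{P_{\Delta}}$ by the same reduction as in (\ref{eqn_Hamming_weight2}), using Lemma \ref{Lemma_Quadratic_modue} to pull out the sum over $a\in\gf_q^*$. The value distribution of $T_{P_{\Delta}}$ is the analog of Lemma \ref{Lemma_T}: the characters of $\gf_{q^m}\times\gf_{q^s}^t$, evaluated on the connection set $\mathcal{D}$ of Lemma \ref{Lemma_graph_rank_odd}, give the eigenvalues $\frac{1}{q+1}(T_{P_{\Delta}}-1)$ of $\textrm{Cay}(\mathcal{G},\mathcal{D})$, and the isomorphism with the Hermitian forms graph together with Lemma \ref{Lemma_graph_spectrum} forces $T_{P_{\Delta}}=(-1)^j q^{2m-j}$ with frequency $f_j$. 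Substituting yields Table \ref{Table_even2}. For $\E_{(q,m)}$ I would split the codewords according to $b=0$ and $b\in\gf_q^*$ as in the proof of Theorem \ref{Thereom_even3}: the case $b=0$ contributes the $\D_{(q,m)}$ distribution just established, while for $b\neq 0$ the weight equals $q^s - q^{s-1} - 1 - \frac{1}{q}\, R_{P_{\Delta},b}(\beta)$, and Lemma \ref{Lemma_distr_main2} combined with Lemma \ref{Lemma_Rank_dis_P} produces the remaining rows of Table \ref{Table_even3}.

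The main obstacle is bookkeeping: one must verify that no step of the even-$m$ arguments secretly used parity of $m$ beyond the shape of the quadratic form. This is guaranteed because Lemma \ref{Lemma_graph_spectrum} and the frequency formula (\ref{eqn_f}) hold uniformly in $m$, as the paper explicitly notes immediately after that lemma; consequently Lemma \ref{Lemma_Rank_dis_P} delivers the same table of rank frequencies $f_j$ as Lemma \ref{Lemma_Rank_dis_Q}, and the three weight-distribution tables from Section \ref{sec-wteven} transfer unchanged to the odd-$m$ setting.
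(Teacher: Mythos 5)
Your proposal is correct and follows essentially the same route as the paper, which itself proves Theorem \ref{Thereom_oddmain} by noting that the arguments of Theorems \ref{Thereom_even1}, \ref{Thereom_even2}, and \ref{Thereom_even3} carry over with $Q_{\Lambda}$ replaced by $P_{\Delta}$ and the rank distribution supplied by Lemmas \ref{Lemma_graph_rank_odd} and \ref{Lemma_Rank_dis_P}. Your additional checks (that $P_{\Delta}$ is a well-defined $\gf_q$-valued quadratic form and that $\epsilon=(-1)^j$ via $T_{P_{\Delta}}=(-1)^jq^{2m-j}$) are exactly the details the paper leaves implicit.
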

\begin{proof}
The proof of this theorem is similar to those of Theorems \ref{Thereom_even1}, \ref{Thereom_even2}, and \ref{Thereom_even3}. The details
of the proof are omitted.
\end{proof}

\begin{example}
Let $q=2$ and $m=5$. Then the code $\D_{(2,5)}$ is a $[1023, 35, 256]$  code over $\gf_{2}$ with the weight
enumerator
\begin{eqnarray*}
&&1+1023 x^{256}+579700 x^{384}+58433760 x^{448}+1765998080 x^{480}+9972695040 x^{496}+11589711243 x^{512}+\\
&&9368289280 x^{528}+1558233600 x^{544}+45448480 x^{576}+347820 x^{640}+341 x^{768}
\end{eqnarray*}
which confirms the weight distribution in Table \ref{Table_even1}.
\end{example}

\begin{example}
Let $q=4$ and $m=3$. Then the code $\C_{(4,3)}$ is a $[4095,9,2880]$  code over $\gf_{2^2}$ with the weight
enumerator
\begin{eqnarray*}
&&1+55692 x^{2880}+205632 x^{3120}+819 x^{3840}
\end{eqnarray*}
which confirms the weight distribution in Table \ref{Table_even2}.
\end{example}

\begin{example}
Let $q=3$ and $m=3$. Then the code $\E_{(3,3)}$ is a $[728,16,404]$  code over $\gf_3$ with the weight enumerator
\begin{eqnarray*}
&&1+1820 x^{404}+1456 x^{405}+262080  x^{431}+180180 x^{432}+13394160 x^{476}+7076160 x^{477}+7339696 x^{485}+\\
&& 3669848 x^{486}+7076160 x^{503}+3159000 x^{504}+622440 x^{512}+262080 x^{513}+1456 x^{647}+182 x^{648}+2 x^{728}
\end{eqnarray*}
which confirms the weight distribution in Table \ref{Table_even3}.
\end{example}

\section{Summary}\label{sec-conclusion}

In this paper, we generalized the $p$-ary cyclic codes described in \cite{Li-Hu-Feng-Ge}, and determined
the weight distribution of this class of cyclic codes $\C_{(q,m)}$ over $\gf_q$ for both even and odd $m$,
where $q$ is any power of $p$.
In addition, we found the weight distributions of two other related families of cyclic codes $\D_{(q,m)}$ and
$\E_{(q,m)}$.

\end{document}